\newcommand{\RR}{{\mathbb{R}}}
\newcommand{\CC}{{\mathbb{C}}}
\newcommand{\ZZ}{{\mathbb{Z}}}
\newcommand{\pa}{\partial}
\newcommand{\ii}{{\rm i}}
\newcommand{\dd}{{\rm d}}
\newcommand{\sfrac}[2]{{\textstyle\frac{#1}{#2}}}
\newcommand{\Tr}{\mathrm{Tr}}
\newcommand{\uu}{\mathfrak{u}}
\newcommand{\su}{\mathfrak{su}}
\newcommand{\T}{\mathbf{T}}
\newcommand{\y}{\mathbf{y}}
\numberwithin{equation}{section}
\theoremstyle{plain} \newtheorem{thm}{Theorem}
\theoremstyle{plain} \newtheorem{prop}[thm]{Proposition}
\theoremstyle{plain} \newtheorem{cor}[thm]{Corollary}
\theoremstyle{definition} 
\theoremstyle{plain} \newtheorem{conj}[thm]{Conjecture}
\title{The large $N$ limit of the Nahm transform}
\author{Derek Harland\footnote{email address: d.g.harland@durham.ac.uk}
  \bigskip
  \\Department of Mathematical Sciences,
  \\Durham University,
  \\DH1 3LE}
\date{14th February 2010}
\begin{document}

\maketitle

\begin{abstract}
 We consider the large $N$ limit of the Nahm transform, which relates charge $N$ monopoles to solutions to the Nahm equation involving $N\times N$ matrices.  In the large $N$ limit the former approaches a magnetic bag, and the latter approaches a solution of the Nahm equation based on the Lie algebra of area-preserving vector fields on the 2-sphere.  We show that the Nahm transform simplifies drastically in this limit.
\end{abstract}

\section{Introduction}
\label{sec:1}

Magnetic monopoles in SU(2) Yang-Mills-Higgs theory are static solutions of the field equations in 3 dimensions.  They are classified by an integer $N$, the topological charge.  The Nahm equation is a non-linear ODE for a triple of matrix-valued functions, and solutions can again be classified by an integer, the size $N$ of the matrices.  The Nahm transform \cite{nahm,cg,hitchin} is a one-to-one correspondence between charge $N$ magnetic monopoles (in the BPS limit) and $N\times N$ matrix solutions of the Nahm equation.

The magnetic bag conjecture, originally formulated by Bolognesi \cite{bol1}, concerns the large $N$ limit magnetic monopoles.  The conjecture states that for large $N$ there are charge $N$ magnetic monopoles whose fields behave in an essentially abelian way except on a thin spherical shell.  In the limit $N\to\infty$, the shell becomes infinitely thin compared with its diameter, so the non-linear field equations become linear almost everywhere.  A more refined picture of magnetic bags was subsequently presented by Lee and Weinberg \cite{lw}.  More recently, magnetic bags have been coupled to gravity \cite{bol2} and found an interpretation in condensed matter models based on holography \cite{bt}.

The large $N$ limit of the Nahm equations was considered much earlier by Ward \cite{ward}.  The main idea is that the space of real functions on a surface can form a Lie algebra $\uu(\infty)$, with the Lie bracket given by a Poisson bracket; an idea going back to Hoppe \cite{hoppe} states that this Lie algebra should be considered in some sense to be the large $N$ limit of the Lie algebra of $\uu(N)$ of hermitian $N\times N$ matrices.  Ward showed that, somewhat surprisingly, the non-linear $\uu(\infty)$ Nahm equations are equivalent to a set of linear equations.  Many other intiguing interpretations of the $\uu(\infty)$ Nahm equation have been found by Donaldson \cite{don}, who was apparently unaware of Ward's earlier work.

Given that both sides of the Nahm transform have an $N\to\infty$ limit, and that both sides ``linearise'' in this limit, it is tempting to speculate that the Nahm transform itself persists in the large $N$ limit.  The purpose of the present article is to argue that this is indeed the case.  A Nahm transform for magnetic bags will be presented, which adds to the growing list of generalised Nahm transforms \cite{jardim}.


An outline of the remainder of this article is as follows.  We will recall in section \ref{sec:2} essential features of magnetic monopoles and magnetic bags.  Our main result is in section \ref{sec:3}: we will describe a simple transform which relates solutions of the $\uu(\infty)$ Nahm equation to magnetic bags.  Sections \ref{sec:4} and \ref{sec:5} discuss how our Nahm transform for magnetic bags can be understood as the large $N$ limit of the traditional Nahm transform, using the concept of the ``fuzzy sphere''.  Section \ref{sec:4} introduces the fuzzy sphere, while section \ref{sec:5} discusses the Nahm transform.  Section \ref{sec:5} also contains a discussion of the string theoretical picture of our Nahm transform.  We illustrate our proposal in section \ref{sec:6} with a simple example.  In section \ref{sec:7} we summarise our results and discuss some interesting open problems.

\section{Magnetic bags}
\label{sec:2}

Our starting point is SU(2) Yang-Mills-Higgs theory in the BPS limit.  Let $y^i$ be coordinates on $\RR^3$, let $A = A_i\dd y^i$ denote an SU(2) gauge field, and let $\Phi$ denote an $\mathfrak{su}(2)$-valued scalar field, where $i=1,2,3$.  The covariant derivative of $\Phi$ and the field strength are
\begin{eqnarray}
D\Phi &=& \dd\Phi + e(A\Phi-\Phi A) \\
F &=& \dd A + e A\wedge A,
\end{eqnarray}
with $e>0$ the coupling constant.  The static Yang-Mills-Higgs energy functional is
\begin{equation}
E = -\frac{1}{4}\int_{\RR^3} \Tr\left(F\wedge\ast F + D\Phi\wedge\ast D\Phi\right).
\end{equation}
By choosing appropriate units of length and energy, one could arrange for the coupling constant $e$ to be 1; however, it will prove convenient to keep $e$ as a free parameter in the theory.

We impose the boundary condition,
\begin{equation}
\|\Phi\|:=\sqrt{-\sfrac{1}{2}\Tr(\Phi^2)} \rightarrow v \mbox{ as }r:=\sqrt{y^iy^i}\rightarrow\infty,
\end{equation}
for some $v>0$.  Then the asymptotic values of $\Phi$ define a map from the boundary $S^2_\infty$ of $\RR^3$ to $S^2\subset\mathfrak{su}(2)$.  Such maps are classified by a degree $N\in\ZZ$, which is called the topological charge.  The magnetic charge is
\begin{equation}
\label{magnetic charge}
q := -\frac{1}{2}\int_{S^2_\infty} \frac{\Tr(F\Phi)}{\|\Phi\|} = \frac{2\pi N}{e}.
\end{equation}

By reversing orientations if necessary, we can assume that $N\geq0$.  Then a well-known argument due to Bogomolny shows that there is a lower bound on the energy,
\begin{equation}
\label{bog}
E \geq v q,
\end{equation}
which is saturated if and only the Bogomolny equation 
\begin{equation} \label{bog eq} F=\ast D\Phi \end{equation}
holds.  A solution of (\ref{bog eq}) is called a \emph{magnetic monopole}.  It is known that the space of gauge equivalence classes of solutions to (\ref{bog eq}) is very large: in fact, it forms a manifold of dimension $4|N|-3$.

Magnetic bags are solutions of U(1) Yang-Mills-Higgs theory with a source of magnetic charge on a closed surface $\Sigma\subset\RR^3$.  We will assume that all fields are zero on the interior of $\Sigma$, so the Higgs field and gauge field will be a real function $\phi$ and a real 1-form $a$ defined on local coordinate patches of the exterior $W$ of $\Sigma$.  The static energy functional is
\begin{equation}
\label{bag energy}
E = \frac{1}{2} \int_{W} \dd\phi\wedge\ast\dd\phi + f\wedge\ast f,
\end{equation}
where $f = \dd a$ is the field strength.  We impose the following boundary conditions on the Higgs field:
\begin{equation}
\label{bag bc}
 \phi = 0 \mbox{ on } \Sigma ,\quad \phi\to v \mbox{ as } r\to\infty.
\end{equation}
The magnetic charge of a magnetic bag is defined to be
\begin{equation}
\label{bag charge}
 q = \int_{S^2_\infty} f.
\end{equation}
It is easy to show that the magnetic bag energy admits a Bogomolny-type bound:
\begin{eqnarray}
 E &=& \frac{1}{2}\int_W |\dd\phi-\ast f|^2\,\dd^3 y + \int_W\dd\phi\wedge f \\
 &\geq& \int_{\partial W} \phi f \\
 &=& vq.
\end{eqnarray}
The bound is saturated if and only if the Bogomolny equation,
\begin{equation}
\label{bag bog}
 \dd\phi = \ast f,
\end{equation}
holds.  A solution of (\ref{bag bog}) satisfying the boundary conditions (\ref{bag bc}) is called a \emph{magnetic bag}.  In the more refined terminology of \cite{lw} this is an ``abelian bag''; we will not have anything to say about ``non-abelian bags'' in this article.  We will restrict attention to the case where $\Sigma$ is connected with genus 0, in other words, $\Sigma$ is topologically a 2-sphere.  However, there are many other possibilities: for example, $\Sigma$ could be a torus, or a disjoint union of multiple 2-spheres.

It follows from (\ref{bag bc}), (\ref{bag charge}) and (\ref{bag bog}) that, at large distances from the origin,
\begin{equation}
\label{bag asymptotics}
\phi = v - \frac{q}{4\pi r} + O(r^{-2}).
\end{equation}
The simplest example of a magnetic bag is Bolognesi's spherical bag.  It has $\Sigma$ a 2-sphere with radius
\begin{equation}
\label{bag radius}
R=q/(4\pi v),
\end{equation}
and spherically symmetric scalar field $\phi$:
\begin{equation}
\label{bag phi}
\phi(r) = \begin{cases} v\left(1-R/r\right) & R\leq r \\
           0 & 0\leq r<R.
          \end{cases}
\end{equation}
The field strength $f$ is obtained from the Bogomolny equation (\ref{bag bog}):
\begin{equation}
 f = \ast\dd\phi = \begin{cases} (q/8\pi r^3)\epsilon_{ijk}y^i\dd y^j\wedge\dd y^k & R\leq r \\
           0 & 0\leq r<R.
          \end{cases}
\end{equation}
Note that, because the magnetic charge is non-zero, the gauge potential $a$ can only be written on local coordinate patches of $W$.

Bolognesi's magnetic bag conjecture states that there are sequences of monopoles which converge to given magnetic bags as their charge $N$ tends to infinity.  From (\ref{magnetic charge}), it is clear that the coupling constant $e$ should tend to $\infty$ in order that the magnetic charge (and hence the energy) remains finite.  Formally, we will state the conjecture as follows:
\begin{conj}
\label{conj:bag}
For any magnetic bag $f,\phi,\Sigma$, there is a sequence $A^{(k)},\Phi^{(k)}$ of charge $N_k$ monopoles with coupling constant $e_k$, such that
$N_k\rightarrow\infty$, $2\pi N_k/e_k\rightarrow q$, $\|\Phi\|\rightarrow\phi$, and $-\sfrac{1}{2}\Tr(F\Phi)/\|\Phi\|\rightarrow f$ as $k\rightarrow\infty$.
\end{conj}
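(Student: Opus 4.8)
The statement is a conjecture rather than a theorem, so I would not expect a complete proof; the plan is to describe the natural line of attack, indicate where it is tractable, and isolate the essential difficulty. The most promising route is constructive, through the ADHMN (Nahm) construction itself. For a fixed magnetic bag $f,\phi,\Sigma$, the scheme is: (i) produce an explicit sequence of Nahm data, i.e.\ solutions $(T_1^{(k)},T_2^{(k)},T_3^{(k)})$ of the $\uu(N_k)$ Nahm equation on an interval with the correct pole and boundary behaviour, which in the limit $N_k\to\infty$ converges --- under the identification of $\uu(N)$ with truncations of the Poisson algebra on $S^2$, the ``fuzzy sphere'' of sections~\ref{sec:4}--\ref{sec:5} --- to the $\uu(\infty)$ Nahm datum that the transform of section~\ref{sec:3} associates to the bag; (ii) apply the ordinary Nahm transform to each quadruple $(T_1^{(k)},T_2^{(k)},T_3^{(k)},e_k)$ to obtain a charge-$N_k$ monopole $(A^{(k)},\Phi^{(k)})$; (iii) show that $\|\Phi^{(k)}\|\to\phi$ and $-\sfrac12\Tr(F^{(k)}\Phi^{(k)})/\|\Phi^{(k)}\|\to f$ in the senses demanded by the conjecture.

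For step (i), the natural candidate in the case of Bolognesi's spherical bag is Nahm data built from irreducible $\su(2)$ representations: the $T_i^{(k)}$ proportional to the spin-$(N_k-1)/2$ generators on the interval determined by (\ref{bag radius}), rescaled so that their commutators reproduce, as $N_k\to\infty$, the Poisson structure on a round sphere of radius $R=q/(4\pi v)$ --- precisely the $\uu(\infty)$ datum whose image under the section~\ref{sec:3} transform is (\ref{bag phi})--(\ref{bag bog}). For a general genus-zero $\Sigma$ one instead takes the $\uu(N_k)$ data to be finite-rank fuzzy-sphere approximations of the $\uu(\infty)$ Nahm datum attached to $\Sigma$. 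With $e_k$ chosen so that $2\pi N_k/e_k\to q$, the convergences $q_k\to q$ and $E^{(k)}=vq_k\to vq$ are then automatic from (\ref{magnetic charge}) together with the Bogomolny bounds (\ref{bog}), (\ref{bog eq}) and their magnetic-bag counterparts; so the content of the conjecture is entirely in the pointwise convergence of the fields.

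Step (iii) is where the real work lies, and I expect it to be the main obstacle. The Nahm transform of $(T_i^{(k)},e_k)$ is defined by solving, for each $y\in\RR^3$, an $N_k$-dependent Nahm--Weyl equation and forming $\Phi^{(k)}(y)$, $A^{(k)}(y)$ as $L^2$ pairings of its normalised zero-modes; one must control these solutions \emph{uniformly} in $k$ even though the ``size'' of the Nahm data grows like $N_k$, and show that away from $\Sigma$ the zero-modes localise so that the fields become abelian and reproduce (\ref{bag phi})--(\ref{bag asymptotics}), while across $\Sigma$ the transition region has width $O(1/N_k)$. Turning the ``fuzzy sphere $\to$ sphere'' passage into rigorous spectral/ODE estimates --- rather than the formal matching carried out in the body of the paper --- is the delicate point; controlling the non-abelian boundary layer on $\Sigma$, and ruling out energy escaping to infinity or concentrating off $\Sigma$, are the remaining obstructions. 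It is precisely because these estimates are not in hand that the result is stated as Conjecture~\ref{conj:bag}, with the rest of the article supplying the structural evidence: the identification of the $\uu(\infty)$ Nahm transform with the magnetic-bag transform, its realisation as a large-$N$ limit via the fuzzy sphere, and the string-theory picture.
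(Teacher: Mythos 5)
Your proposal correctly recognises that this statement is Conjecture~\ref{conj:bag}, which the paper does not prove, and your suggested strategy --- approximate the $\uu(\infty)$ Nahm datum attached to the bag by fuzzy-sphere $\uu(N_k)$ Nahm data with $2\pi N_k/e_k\to q$, apply the ordinary Nahm transform, and show that the Weyl-equation zero-modes localise near $s=\pm s_0$ so that $\|\Phi^{(k)}\|\to\phi$ --- is essentially the same route the paper itself lays out as evidence in sections~\ref{sec:4}--\ref{sec:5} and flags as open in section~\ref{sec:7} (including the intermediate conjecture that every set of $\uu(\infty)$ Nahm data is a limit of $\uu(N)$ Nahm data). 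You also identify the same missing ingredient the paper does, namely rigorous, $N$-uniform spectral and localisation estimates replacing the formal fuzzy-sphere limit.
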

In physical terms, the non-zero vacuum expectation value for $\Phi$ breaks the gauge symmetry to U(1): the SU(2) gauge field decomposes into an abelian gauge field and a $W$-boson of mass $e\|\Phi\|$.  The field $W$ is close to zero except on a thin shell, and in the limit $N,e\to\infty$ the thickness of the non-abelian shell tends to 0.

There are two main pieces of evidence that support the magnetic bag conjecture.  First, the tetrahedral, octahedral, and icosahedral monopoles with charges $N=3$, 5, and 11 seem to be the first three members of a sequence converging to the spherical bag: they resemble spherical shells, with $\Phi$ close to zero inside the shell and $W$ close to zero both inside and outside the shell \cite{lw}.  Second, Ward has constructed a flat non-abelian monopole wall \cite{wall}, which serves as a prototype for the non-abelian shell of magnetic monopole with large charge.

\section{The Nahm transform for magnetic bags}
\label{sec:3}

In this section we will describe an explicit one-to-one correspondence between magnetic bags and solutions of the $\uu(\infty)$ Nahm equation.  Let $S^2$ denote the sphere with its standard rotationally-invariant area form $\omega$, normalised so that the surface area of the sphere is $4\pi$, and let $I$ denote the interval $[0,v)$ with coordinate $s$.  There is a natural Poisson bracket on $S^2$ which lifts to $S^2\times I$, and which may be defined as follows:
\begin{equation}
 \label{def pb}
 \{g,h\}\,\omega\wedge\dd s = \dd g\wedge\dd h\wedge\dd s.
\end{equation}
One finds that
\begin{equation}
 \{\hat x^i,\hat x^j\} = \epsilon_{ijk}\hat x^k,
\end{equation}
where $\hat x^1,\hat x^2,\hat x^3$ are the coordinate functions on $\RR^3$ restricted to the unit sphere.  The space of real functions on the sphere equipped with this Poisson bracket forms a Lie algebra, which will be denoted $\uu(\infty)$.  The Poisson bracket associates to each element of $\uu(\infty)$ a vector field on $S^2$, called a Hamiltonian vector field.  Hamiltonian vector fields preserve area, and in fact on $S^2$ all area-preserving vector fields are Hamiltonian.

The Poisson bracket can be used to define a Nahm equation.  Let $t^1,t^2,t^3$ be three functions on $S^2\times I$.  The $\uu(\infty)$ Nahm equation is
\begin{equation}
 \label{NEinfty}
 \frac{\dd t^i}{\dd s} = \frac{4\pi}{q}\frac{1}{2}\epsilon_{ijk} \{t^j,t^k\}.
\end{equation}
A set of $\uu(\infty)$ \emph{Nahm data} consists of a triple $t^1,t^2,t^3$ solving the Nahm equation (\ref{NEinfty}) and the boundary condition
\begin{equation}
 \label{BCinfty}
 t^i(s) = \frac{q}{4\pi} \frac{\hat x^i}{v-s} + O(1) \mbox{ as }s\rightarrow v.
\end{equation}

A set of Nahm data $t^i$ defines a map from $S^2\times I$ to $\RR^3$, which we assume to be invertible.  Thus the coordinate function $s$ and the area form $\omega$ can be pulled back to a region in $\RR^3$.  By multiplying both sides of the Nahm equation with $\omega\wedge\dd s$ and using the definition (\ref{def pb}) of the Poisson bracket, the following equation can be obtained:
\begin{equation}
 \dd t^i\wedge\omega = \frac{4\pi}{q}\frac{1}{2}\epsilon_{ijk}\, \dd t^j\wedge\dd t^k \wedge\dd s.
\end{equation}
Now the functions $t^i$ are just the coordinates $y^i$ on $\RR^3$, so this equation can be rewritten
\begin{equation}
\label{eq:3.1}
 \dd y^i \wedge\omega = \frac{4\pi}{q} \ast\dd y^i\wedge\dd s,
\end{equation}
where the Hodge star $\ast v$ of a $p$-form $v$ is defined by $u\wedge\ast v = g(u,v) \,\dd^3y$ for all $u\in\Lambda^p$.  Simple manipulations reveal that equation (\ref{eq:3.1}) is equivalent to
\begin{equation}
 \dd s = \ast \frac{q\,\omega}{4\pi}.
\end{equation}

This equation is nothing but the Bogomolny equation (\ref{bag bog}) for a magnetic bag, with $\phi=s$ and $f=q\,\omega/4\pi$.  The asymptotics (\ref{bag asymptotics}) for the field $\phi=s$ also follow directly from the boundary condition (\ref{BCinfty}) for Nahm data, and the bag surface is simply the set $\Sigma=t^j(S^2\times\{0\})$ on which $\phi=0$.  Our choice of normalisation for $\omega$ means that the magnetic charge of the bag is $q$.  Therefore a set of Nahm data determines a magnetic bag, with the coordinate $s$ playing the role of the Higgs field and the area form $\omega$ playing the role of the magnetic field.  This construction can be inverted, as the following theorem shows:

\begin{thm}
\label{thm:1}
There is a one-to-one correspondence between
\begin{itemize}
 \item sets of $\uu(\infty)$ Nahm data with the property that the map from $S^2\times I$ to $\RR^3$ defined by $t^i$ is one-to-one, and
 \item magnetic bags with the property that the bag surface $\Sigma$ is diffeomorphic to a 2-sphere and $\dd\phi\neq0$ everywhere on the exterior $W$ of $\Sigma$.
\end{itemize}
\end{thm}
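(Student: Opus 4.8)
The text already constructs the map from $\uu(\infty)$ Nahm data to magnetic bags, so the plan is to build the inverse map and then verify that the two are mutually inverse. Unwinding the dictionary used in the text — on the Nahm side the objects $t^i,\omega,s$ on $S^2\times I$ correspond on the bag side to $y^i,\frac{4\pi}{q}f,\phi$ on $W$ — the whole task reduces to this: given a magnetic bag $f,\phi,\Sigma$, produce a diffeomorphism $\psi\colon S^2\times I\to W$ with $\phi\circ\psi=s$ and $\psi^{\ast}f=\frac{q}{4\pi}\,\omega$, the latter meaning the pullback of $\omega$ from the $S^2$ factor. Given such a $\psi$, set $t^i:=y^i\circ\psi$; this is one-to-one onto $W$ because $\psi$ is, the Bogomolny equation (\ref{bag bog}) turns into the Nahm equation (\ref{NEinfty}) on reading the manipulations between (\ref{def pb}) and (\ref{eq:3.1}) in reverse (each step there is an equivalence, $\omega\wedge\dd s$ being a volume form), and the bag asymptotics (\ref{bag asymptotics}) turn into the boundary condition (\ref{BCinfty}) — modulo one subtlety discussed at the end.

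First I would extract what the hypotheses give. Since $\dd\phi\neq0$ on $W$, the function $\phi$ has no interior critical points, so together with (\ref{bag bc}) and (\ref{bag asymptotics}) its range over $W$ is exactly $[0,v)$, with $0$ attained precisely on $\Sigma$. Thus $\phi\colon W\to[0,v)$ is a submersion, and it is proper since (\ref{bag asymptotics}) makes $\{\phi\le b\}$ closed and bounded in $\RR^3$ for each $b<v$. By Ehresmann's theorem $\phi$ is then a locally trivial fibration, hence trivial since the base $[0,v)$ is contractible, with fibre $\phi^{-1}(0)=\Sigma\cong S^2$; so $W\cong S^2\times I$ as manifolds with boundary. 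Moreover $f=\ast\dd\phi=\iota_{\nabla\phi}\,\dd^3y$ restricts to a nowhere-zero $2$-form on each level set $\phi^{-1}(s)$ (contracting it with two tangent vectors yields $\pm|\nabla\phi|$ times the area they span), and Stokes' theorem with $\dd f=0$ gives $\int_{\phi^{-1}(s)}f=\int_{S^2_\infty}f=q$ for every $s$.

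To fix the identification $\psi$ I would flow along the gradient. Let $V=\nabla\phi/|\nabla\phi|^2$, so that $\dd\phi(V)=1$; its flow $\Theta_s$ carries $\phi^{-1}(0)$ onto $\phi^{-1}(s)$ and exists for all $s\in[0,v)$, since a trajectory issuing from $\Sigma$ stays in the compact set $\{\phi\le s\}$. The key point is that $V$ is parallel to $\nabla\phi$, so $\iota_V f=0$ and hence $\mathcal{L}_V f=\dd(\iota_V f)=0$ by $\dd f=0$: the flow preserves $f$ exactly. Now choose, via Moser's theorem, a diffeomorphism $\psi_0\colon S^2\to\Sigma$ with $\psi_0^{\ast}(f|_\Sigma)=\frac{q}{4\pi}\,\omega$ — possible because $\frac{4\pi}{q}f|_\Sigma$ and $\omega$ are area forms on the $2$-sphere of equal total area $4\pi$ — and put $\psi(x,s):=\Theta_s(\psi_0(x))$. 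Then $\psi$ is a diffeomorphism $S^2\times I\to W$ with $\phi\circ\psi=s$, and $\psi^{\ast}f$ is closed and annihilated by $\pa_s$ (because $\psi$ pushes $\pa_s$ to $V$ and $\iota_V f=0$), hence equals the pullback of a single $2$-form on $S^2$, which by restriction to $s=0$ is $\psi_0^{\ast}(f|_\Sigma)=\frac{q}{4\pi}\,\omega$. This is the required $\psi$, and $t^i=y^i\circ\psi$ the required Nahm data.

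The main obstacle is the behaviour near $s=v$, and it also clarifies the sense of ``one-to-one''. The flow lines of $V$ become radial as $r\to\infty$, so $\hat x^i\circ\psi_s$ converges as $s\to v$ to $\hat x^i\circ\chi$ for some area-preserving diffeomorphism $\chi$ of $S^2$; replacing $\psi_0$ by $\psi_0\circ\chi^{-1}$ (still an admissible Moser choice) makes $\chi=\mathrm{id}$, so that (\ref{bag asymptotics}) gives (\ref{BCinfty}) with the correct $\hat x^i$. The leftover freedom in $\psi$ — precomposition with an $s$-dependent area-preserving diffeomorphism of $S^2$ that is trivial as $s\to v$ — is precisely the natural gauge symmetry of $\uu(\infty)$ Nahm data, the analogue of the $\mathrm{U}(N)$ freedom of matrix Nahm data, and it is with respect to this equivalence that the correspondence is a genuine bijection. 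The serious work is thus the asymptotic analysis needed to make (\ref{bag asymptotics}) uniform enough to control both the global existence of the flow up to $s=v$ and the limiting diffeomorphism $\chi$; everything else is formal manipulation plus Moser's theorem and Ehresmann's theorem.
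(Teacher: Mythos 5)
Your proposal is correct and follows essentially the same route as the paper: the normalised gradient field $Y=\dd\phi^\sharp/|\dd\phi|^2$ (your $V$) with $L_Y\phi=1$ and $L_Yf=0$, a Moser diffeomorphism $S^2\to\Sigma$ pulling $f|_\Sigma$ back to $\frac{q}{4\pi}\omega$, and the flow of $Y$ to build the map $S^2\times I\to W$, with the Nahm equation recovered by reversing the forward computation. You are in fact more careful than the paper on two points it leaves implicit --- the adjustment of the Moser map by the limiting area-preserving diffeomorphism $\chi$ so that (\ref{BCinfty}) holds with the correct $\hat x^i$, and the residual area-preserving reparametrisation freedom modulo which the correspondence is a genuine bijection.
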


\begin{proof}
We have already described the map from Nahm data to magnetic bags, so to complete the proof we need to describe the inverse construction.  There is a natural vector field $\pa/\pa s$ on $S^2\times I$ which has the following two properties:
\begin{equation}
 L_{\frac{\pa}{\pa s}} s = 1,\quad L_{\frac{\pa}{\pa s}} \omega = 0,
\end{equation}
where $L$ denotes the Lie derivative.  In order to invert the Nahm transform it is convenient to find  a vector field $Y$ on $W$ which satisfies the analogous conditions:
\begin{equation}
 L_Y \phi = 1,\quad L_Y f = 0.
\end{equation}
The unique vector field $Y$ satisfying both conditions is
\begin{equation}
 Y = \frac{\dd\phi^\sharp}{|\dd\phi|^2} = \frac{\pa_i\phi}{\pa_j\phi\pa_j\phi} \frac{\pa}{\pa y^i},
\end{equation}
Note that $Y$ is well-defined, because $\dd\phi\neq0$ by assumption.

This vector field $Y$ enables us to define a map from $S^2\times I$ to $W$.  To begin, $\Sigma$ is diffeomorphic to $S^2$ by assumption.  One can always find a map $h:S^2\to\Sigma$ such that $h^\ast f = q\,\omega/4\pi$, as follows from Moser's theorem (see for example \cite{mcduff}).  The idea is to extend this to a map $H:S^2\times I\rightarrow W$ using $Y$.  Let $p$ denote a point in $S^2$.  We can construct a flow line from $p$ using $Y$, and $H(p,s)$ is defined to be the unique point on the flow line which satisfies $\phi(H(p,s))=s$.  More explicitly, if $y(\sigma)$ is a solution of the differential equation,
\begin{eqnarray}
 \frac{\dd y^i}{\dd \sigma} &=& Y^i(y(\sigma)), \\
 y(0) &=& h(p),
\end{eqnarray}
then $H(p,s)$ is defined to be the point $y(s)$.  The point $y(s)$ lies on the level set $\{\phi=s\}$, because $\dd\phi(y(\sigma))/\dd\sigma = 1$.

Having constructed the map $H$, it remains to be shown that the coordinate functions $t^i=y^i\circ H$ satisfy the Nahm equation and boundary conditions -- this can be done by reversing the steps in the first part of the proof.  It is straightforward to check that the composition of the inverse Nahm transform described here and the Nahm transform described above is the identity.
\end{proof}

As a simple example, consider the following Nahm data \cite{ward}:
\begin{equation}
 t^i(s) = \frac{q}{4\pi} \frac{\hat x^i}{v-s}.
\end{equation}
The magnetic bag obtained via the $\uu(\infty)$ Nahm transform described in theorem \ref{thm:1} is the spherical magnetic bag described in section \ref{sec:2}.

\subsection{Linerisation of the Nahm equations}

Theorem \ref{thm:1} is essentially a correspondence between solutions of the $\uu(\infty)$ Nahm equation and harmonic functions on $\RR^3$.  The existence of such a correspondence was mentioned by Donaldson \cite{don}, but the proof given here is our own.  Previously, Ward was able to demonstrate that harmonic functions lead to solutions of the $\uu(\infty)$ Nahm equation \cite{ward}.  Theorem \ref{thm:1} fully accounts for Ward's observation that the $\uu(\infty)$ Nahm equations ``linearise''.  To make this explicit, we will briefly recall Ward's original treatment of the $\uu(\infty)$ Nahm equations.

The starting point in \cite{ward} was the observation that the Nahm equation is equivalent to $[D_0,D_1]=0$, where $D_0, D_1$ are two differential operators depending on a spectral parameter $\zeta$:
\begin{eqnarray}
D_0 &=& -\frac{4\pi}{q} \{t^1+\ii t^2,  \cdot\}  + \zeta\left(\frac{\dd }{\dd s} + \frac{4\pi\ii}{q}\{t^3, \cdot\} \right) \\
D_1 &=& \left(\frac{\dd }{\dd s} - \frac{4\pi\ii}{q}\{t^3, \cdot\}\right) + \zeta\frac{4\pi}{q} \{t^1-\ii t^2,  \cdot\}.
\end{eqnarray}
The Nahm equations imply that $D_0, D_1$ factorise:
\begin{equation}
\left( \begin{array}{c} -\ii D_1 \\ D_0 \end{array} \right) = \xi \sigma^j \left( \begin{array}{c} -\ii \\ \zeta \end{array} \right) \frac{\pa }{\pa y^j}.
\end{equation}
Here $\xi$ is the $2\times 2$ matrix,
\begin{equation}
\xi = \frac{\pa t^i}{\pa s}\sigma^i = \frac{2\pi}{q}\epsilon_{ijk}\{t^j, t^k\} \sigma^i,
\end{equation}
which agrees with the matrix introduced in \cite{ward} up to a change of basis.  Ward showed that the Nahm equation is equivalent to the following non-linear equation for $\xi$:
\begin{equation}
\sigma^k \xi \frac{\pa \xi}{\pa y^k} = 0.
\end{equation}
Moreover, by introducing $\mu=\xi/\det(\xi)$, the following linear equation was obtained:
\begin{equation}
\label{ward equation}
\sigma^j \frac{\pa \mu}{\pa y^j} = 0.
\end{equation}

Ward's linear equation (\ref{ward equation}) has a natural interpretation in terms of Clifford algebras.  The gamma matrices in 3 dimensions are just the Pauli matrices $\sigma^i$, so the operator $\sigma^j\pa_j$ is the image in the Clifford algebra of the exterior derivative $\dd$.  Likewise, the traceless hermitian matrix $\mu=\mu_j\sigma^j$ is the image in the Clifford algebra of a 1-form $\mu_j\dd y^j$.  Equation (\ref{ward equation}) says that this 1-form is both closed and co-closed:
\begin{equation}
\pa_i\mu_j-\pa_j\mu_i=0,\quad \pa_i\mu_i=0.
\end{equation}
So $\mu_j\dd y^j$ is the exterior derivative of a harmonic function, at least locally.  In fact, it is not hard to see that $\mu_j\dd y^j = \dd s$, so once again we have deduced that $s$ defines a harmonic function on $\RR^3$.

\subsection{Conserved charges}

Like the $\uu(N)$ Nahm equation, the $\uu(\infty)$ Nahm equation is integrable, and infinite tower of conserved charges can be constructed in an analogous way.  The Nahm equation is equivalent to a Lax equation,
\begin{equation}
\label{NE Lax}
\frac{\dd T}{\dd s} = \frac{4\pi}{q}\{T,-\ii t^3 + \zeta(t^1-\ii t^2)\},
\end{equation}
where
\begin{equation}
T(\zeta) := -D_0 + \zeta D_1 = \frac{4\pi}{q}\left( (t^1+\ii t^2) - 2\ii\zeta t^3 + \zeta^2(t^1-\ii t^2)\right).
\end{equation}
It follows that the charges,
\begin{equation}
C_l(\zeta,s) := \int_{S^2\times\{s\}} T^l\omega, \quad  l\in\ZZ_{>0},
\end{equation}
are conserved: $\pa_s C_l=0$.  The charge $C_l$ is a degree $2l$ polynomial in $\zeta$ satisfying a reality condition $\overline{C_l(\zeta)}=\bar{\zeta}^{2l}C_l(-\bar{\zeta}^{-1})$, so transforms in the $(2l+1)$-dimensional real irreducible represenation of SU(2).

In the magnetic bag picture, the charges $C^l$ are integrals over level sets of $\phi$:
\begin{equation}
C_l(\zeta,s) = \int_{\{\phi=s\}} T^l \ast\dd\phi.
\end{equation}
It is straightforward to check directly that these integrals are independent of $s$, by using the fact that $T^l$ (regarded as a function of $y^j$) solves the Laplace equation for all $l$.

The conserved charges associated with the spherical magnetic bag all vanish.  In this sense they provide a measure of the failure of a magnetic bag to be spherically symmetric.

\section{The fuzzy sphere}
\label{sec:4}

Thus far we have introduced a correspondence between the $\uu(\infty)$ Nahm equations and magnetic bags.  In the next section we will argue that this Nahm transform for magnetic bags can be considered to be the large $N$ limit of the Nahm transform for magnetic monopoles.  Our discussion will be based on the concept of the ``fuzzy sphere'' \cite{hoppe,madore}, and in this section we will review basic facts about fuzzy spheres.  Our approach will be quite algebraic, following the spirit of \cite{madore}, but for rigorous analytic discussions see \cite{fuzzy}.

Let $J^1,J^2,J^3$ denote the generators of the irreducible $N$-dimensional representation of $\su(2)$, satisfying $J^iJ^j-J^jJ^i=\epsilon_{ijk}J^k$, and let $X^j=(2\ii/N)J^j$.  The matrices $X^j$ satisfy the identities,
\begin{eqnarray}
\label{eq:4.1}
X^iX^j-X^jX^i &=& \frac{2\ii}{N}\epsilon_{ijk}X^k \\
\label{eq:4.2}
(X^1)^2+(X^2)^2+(X^3)^2 &=& 1 - \frac{1}{N^2},
\end{eqnarray}
(the second identity is deduced from the quadratic Casimir of the representation of $\su(2)$).  The space $\CC^{N\times N}$ of all complex $N\times N$ matrices is generated as an algebra by the $X^j$, subject to these relations.  Similarly, the space of complex functions on $S^2$ is generated as an algebra by the coordinate functions $\hat x^1,\hat x^2,\hat x^3$ on $\RR^3$ restricted to the unit sphere, subject to the relations
\begin{eqnarray}
\label{eq:4.3}
\hat x^i\hat x^j-\hat x^j\hat x^i &=& 0 \\
\label{eq:4.4}
(\hat x^1)^2 + (\hat x^2)^2 + (\hat x^3)^2 &=& 1
\end{eqnarray}
Now, as $N\to\infty$, the relations (\ref{eq:4.1}), (\ref{eq:4.2}) converge to the relations (\ref{eq:4.3}), (\ref{eq:4.4}), provided we identify $\hat x^i$ with the large $N$ limit of $X^i$.  In this sense the space of complex functions on the unit sphere can be identified with the large $N$ limit of $\CC^{N\times N}$.  The latter is commonly called the space of complex functions on the fuzzy sphere in this context.


Besides being an algebra, $\CC^{N\times N}$ is also a Lie algebra, with Lie bracket given by
\begin{equation}
\label{fuzzy PB}
\{F,G\} := \frac{N}{2\ii}(FG-GF)\quad \forall F,G\in\CC^{N\times N}.
\end{equation}
(This unusual notation and normalisation for the Lie bracket is chosen to match our conventions elsewhere.)  The Lie bracket satisfies the identities,
\begin{eqnarray}
\{ X^i,X^j \} &=& \epsilon_{ijk} X^k \\
\{F, GH \} &=& \{F,G\}H + G\{F,H\} \quad\forall F,G,H\in\CC^{N\times N},
\end{eqnarray}
while the Poisson bracket of functions on the unit sphere satisfies,
\begin{eqnarray}
\{ \hat x^i,\hat x^j \} &=& \epsilon_{ijk} \hat x^k \\
\{f, gh \} &=& \{f,g\}h + g\{f,h\} \quad\forall f,g,h:S^2\to \CC.
\end{eqnarray}
Thus the Lie bracket on $\CC^{N\times N}$ converges to the Poisson bracket on the space of complex functions on the sphere.

Earlier, we introduced Lie algebra $\uu(\infty)$ of real functions on the sphere.  The matrices $X^i$ are hermitian, so a natural definition for the space of real functions on the fuzzy sphere is the space $\uu(N)$ of hermitian matrices.  Although not closed under the multiplication, $\uu(N)$ is closed under the Lie bracket introduced above.  Thus the Lie algebra $\uu(\infty)$ is the large $N$ limit of the Lie algebra $\uu(N)$.

A final aspect to the relation between $\uu(\infty)$ and $\uu(N)$ involves the natural action of $\su(2)$ on both of them.  The matrices $X^i$ span an $\su(2)$ subalgebra of $\uu(N)$, and this $\su(2)$ acts on $\uu(N)$ via the adjoint representation.  This representation of $\su(2)$ decomposes into irreducibles as follows:
\begin{equation}
\uu(N) = N\otimes N = 1\oplus 3\oplus 5\oplus \dots \oplus 2N-3\oplus 2N-1.
\end{equation}
The irreducible representations are eigenspaces of the quadratic Casimir,
\begin{equation}
\mbox{Cas}F := \{X^1,\{X^1,F\}\} + \{X^2,\{X^2,F\}\} + \{X^3,\{X^3,F\}\}.
\end{equation}
If $F$ belongs to the $2l+1$-dimensional representation, then $\mbox{Cas} F=-l(l+1)F$.  Similarly, coordinate functions $\hat x^i$ define an $\su(2)$-subalgebra of $\uu(\infty)$, and the decomposition of this representation into irreducible representations under the adjoint action of $\su(2)$ is,
\begin{equation}
\uu(\infty) = 1\oplus 3\oplus 5\oplus \dots.
\end{equation}
The quadratic Casimir for this representation coincides with the Laplacian,
\begin{equation}
\mbox{Cas}f = \triangle_{S^2}f = \{\hat x^1,\{\hat x^1,f\}\} + \{\hat x^2,\{\hat x^2,f\}\} + \{\hat x^3,\{\hat x^3,f\}\}
\end{equation}
so the $(2l+1)$-dimensional irreducible subrepresentation is the space of spherical harmonics of degree $l$.  Thus $\uu(N)$ converges to $\uu(\infty)$ not only as a Lie algebra, but also as a representation of $\su(2)$, and the irreducible sub-representations can be interpreted as fuzzy spherical harmonics.

In what follows we will need to interpret a number of familiar concepts from linear algebra in terms of fuzzy spheres.  First of all, recall that a matrix $F$ has determinant 0 if and only if it fails to have a multiplicative inverse.  Similarly, a function $f$ on the sphere fails to have a multiplicative inverse if and only if it has a zero somewhere on the sphere.  Therefore, functions on the fuzzy sphere with determinant 0 should be interpreted as functions on the fuzzy sphere with zeros.

This logic can be extended a little by recalling that the spectrum of a matrix is just the set of its eigenvalues:
\begin{equation}
\mbox{Spec}(F) = \{ \lambda\in\CC | F-\lambda \mbox{ is not invertible}\}.
\end{equation}
The analogous object for functions on the sphere is the range:
\begin{equation}
\mbox{Range}(f) = \{ \lambda\in\CC | f-\lambda \mbox{ is not invertible}\}.
\end{equation}
Thus the spectrum of a matrix should be interpreted as the range of a function on the fuzzy sphere.  Support is lent to this assertion by the observations that the spectrum of a hermitian matrix is real, and that
\begin{eqnarray}
\mbox{Spec}(X^i) &=& \left\{\frac{1}{N}-1,\frac{3}{N}-1,\dots,1-\frac{3}{N},1-\frac{1}{N}\right\} \\
&\to& [-1,1] \mbox{ as }N\to\infty.
\end{eqnarray}

The trace is a linear function from $\CC^{N\times N}$ to $\CC$ with the property that the trace of the Lie bracket of any two functions is zero.  Similarly, the integral over $S^2$ with respect to the standard area form $\omega$ is a linear function from the space of functions to $\CC$ which evaluates to zero on the Poisson bracket of any two functions.  So the expression
\begin{equation}
\frac{4\pi}{N}\Tr(F)
\end{equation}
should be interpreted as the integral of $F$ over the fuzzy sphere.  The normalisation has been chosen so that the integral of the identity matrix is $4\pi$.

Given a column vector $v\in\CC^N$, we can form an $N\times N$ matrix $vv^\dagger$.  We would like to know what kind of function this matrix represents.  For simplicity, we will assume that $v$ is an eigenvector of the fuzzy coordinate function $X^3$ with eigenvalue $\lambda$, and we will normalise $v$ so that
\begin{equation}
 \frac{4\pi}{N}\Tr(vv^\dagger) = \frac{4\pi}{N}v^\dagger v = 1.
\end{equation}
Now, $vv^\dagger$ commutes with $X^3$, and since the Hamiltonian vector field associated with $\hat x^3$ is a rotation about the $\hat x^3$-axis, we conclude that $vv^\dagger$ is invariant under rotations about the $\hat x^3$-axis.  Moreover, the integral of the product of $vv^\dagger$ with any power of $X^3$ is
\begin{equation}
\frac{4\pi}{N}\Tr(vv^\dagger (X^3)^m) = \lambda^m.
\end{equation}
The only function on the sphere with analogous properties is a delta function with support along the circle $\hat x^3=\lambda$.  Therefore, $vv^\dagger$ should be interpreted as a delta function with support on a circle.

Morally, one should not expect to find a point-like delta function on the fuzzy sphere or any other non-commutative geometry, because the Heisenberg uncertainty principle forbids precise knowledge of all of the coordinates of a point.  However, a point-like delta function can be approximated by a small circle-like delta function; thus if $v$ is a normalised eigenvector of $X^3$ with eigenvalue $1-1/N$ then $vv^\dagger$ converges in the limit $N\to\infty$ to a point-like delta function with support at the north pole $\hat x^3=1$.

\section{The large $N$ limit of the Nahm transform}
\label{sec:5}

\subsection{The $\uu(N)$ Nahm transform}

The Nahm transform is a bijective correspondence between sets of Nahm data and monopoles (modulo gauge equivalence).  Let $s\in(-v,v)$ and let $T^1(s),T^2(s),T^3(s)$ be three functions of $s$ taking values in the space $\mathfrak{u}(N)$ of $N\times N$ hermitian matrices.
The matrix-valued functions $T^i$ are called $\mathfrak{u}(N)$ \emph{Nahm data} if they satisfy the Nahm equation,
\begin{equation}
\label{NE}
\frac{\dd T^i}{\dd s} = -\ii e\,\epsilon_{ijk}T^jT^k,
\end{equation}
the reality condition,
\begin{equation}
\label{RC}
T^i(-s)=T^i(s)^t,
\end{equation}
and the boundary condition,
\begin{equation}
\label{BC}
T^i(s) = \frac{\ii}{e} \frac{J^i}{v-s} + O(1) \mbox{ as }s\rightarrow v.
\end{equation}

Now let $\T=T^j\otimes\sigma_j$ and $\y=y^j\, \mbox{Id}_N\otimes\sigma_j$, and consider the Weyl equation,
\begin{equation}
\label{WE}
\left(\frac{\dd}{\dd s} + e(\T-\y)\right)\psi = 0
\end{equation}
with boundary condition
\begin{equation}
\label{WBC}
\psi(\pm v,y^j) = 0\, ,
\end{equation}
where $\psi(s,y^j)$ is a $2N$-column vector.  Let us consider what form $\psi$ may take near the boundary $s=v$: suppose that $\psi(s)=f(s)u$, with $f$ a function and $u$ an eigenvector of $\ii J^j\otimes\sigma_j$ with eigenvalue $\lambda$.  Using (\ref{BC}), the differential equation (\ref{WE}) reduces to
\begin{equation}
\frac{\dd f}{\dd s} + \frac{\lambda f}{v-s} = 0.
\end{equation}
The solution of this equation is $f(s)=(v-s)^{\lambda}$, and this decays at $v$ only if $\lambda>0$.  The matrix $\ii J^j\otimes \sigma_j$ has eigenvalues $(\pm N-1)/2$ with eigenspaces of dimension $N\pm1$, and it follows that the space of solutions to (\ref{WE}) satisfying $\psi(v)=0$ is $(N+1)$-dimensional.  Similarly, the space of solutions satisfying $\psi(-v)=0$ is also $(N+1)$-dimensional.  Taking the intersection, the space of solutions to (\ref{WE}), (\ref{WBC}) is 2-dimensional.

To implement the Nahm transform, take a basis $\psi_1,\psi_2$ for the space of solutions to (\ref{WE}), (\ref{WBC}) which is orthonormal:
\begin{equation}
\label{NC}
\int_{-v}^{v} \left( \begin{array}{c} \psi_1^\dagger \\ \psi_2^\dagger \end{array}\right) 
\left(\begin{array}{cc}\psi_1 & \psi_2\end{array}\right) \,\dd s 
= \left(\begin{array}{cc}1&0\\0&1\end{array}\right).
\end{equation}
Then the monopole corresponding to the Nahm data $T^j$ is
\begin{eqnarray}
\label{NT Phi}
\Phi &=& \ii\int_{-v}^{v} \left( \begin{array}{c} \psi_1^\dagger \\ \psi_2^\dagger \end{array}\right) 
\left(\begin{array}{cc}\psi_1 & \psi_2\end{array}\right) s\,\dd s,\\
\label{NT A}
A_j &=& \frac{1}{e}\int_{-v}^{v} 
\left( \begin{array}{c} \psi_1^\dagger \\ \psi_2^\dagger \end{array}\right) 
\frac{\dd }{\dd y^j}\left(\begin{array}{cc}\psi_1 & \psi_2\end{array}\right) \,\dd s .
\end{eqnarray}
It has been proven that $A_j,\Phi$ solve the Bogomolny equation (\ref{bog eq}), and that all solutions of the Bogomolny equation may be determined in this way \cite{cg,hitchin}.

\subsection{The large $N$ limit}

It is easy to see how a set of $\uu(N)$ Nahm data can converge to a set of $\uu(\infty)$ Nahm data.  The matrix-valued functions $T^i(s)$ are functions on the Cartesian product of the fuzzy sphere with the interval $(-v,v)$.  Their large $N$ limit (if it exists) will be a triple of functions $t^i$ on $S^2\times(-v,v)$.  If we let $2\pi N/e\to q$ as in section \ref{sec:2}, then the Nahm equation (\ref{NE}) for $T^i$ converges to the Nahm equation (\ref{NEinfty}) for $t^i$, using the definition (\ref{fuzzy PB}) of the Lie bracket.  Similarly, the boundary condition (\ref{BC}) for $T^i$ converges to the boundary condition (\ref{BCinfty}) for $t^i$.  Thus the restriction of $t^i(s)$ to the interval $[0,v)$ is a set of $\uu(\infty)$ Nahm data.

Note that no information is lost in restricting to the interval $[0,v)$, since the reality condition (\ref{RC}) implies that the Nahm data on the interval $(-v,0]$ are essentially the same as those on $[0,v)$.  However, it should be pointed out that in order to obtain Nahm data for a magnetic bag with non-zero volume (such as the spherical bag), it is necessary that the Nahm data $t^i(s)$ are discontinuous at $s=0$.  It is of course perfectly possible for a sequence of continuous functions to have a discontinuous limit.

To understand how the $\uu(N)$ Nahm transform converges to the $\uu(\infty)$ Nahm transform described in theorem \ref{thm:1} requires more effort; the two Nahm transforms are quite different in character.  The starting point is the \emph{spectral index}, which we define to be half of the difference between the dimensions of the positive and negative eigenspaces of the matrix $\T(s)-\y$.  We have already seen that the space of solutions to (\ref{WE}), (\ref{WBC}) is 2-dimensional precisely because the spectral index increases from $-1$ at $-v$ to $+1$ at $v$ (this is actually a simple finite-dimensional example of what mathematicians call a ``spectral flow'').  The following proposition shows that the spectral index is a strictly increasing function of $s$:
\begin{prop}
\label{prop:1}
Suppose that $T^j(s)$ solve the Nahm equation (\ref{NE}) and that $\T(s)-\y$ has a 0-eigenvalue at some value $s_0$ of $s$ and is non-singular on the intervals $(s_0-\epsilon,s_0)$ and $(s_0,s_0+\epsilon)$.  Then the spectral index takes constant values $I_+,I_-$ on the intervals $(s_0,s_0+\epsilon)$, $(s_0-\epsilon,s_0)$, and the difference $I_+-I_-$ is positive and equal to the dimension of the kernel of $\T(s_0)-\y$.
\end{prop}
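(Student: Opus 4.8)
The plan is to track the eigenvalues of the Hermitian $2N\times 2N$ matrix $M(s):=\T(s)-\y$ as $s$ passes through $s_0$. Since the Nahm equation (\ref{NE}) is a polynomial ODE, the solution $T^j(s)$ -- and hence $M(s)$ -- is real-analytic in $s$ near $s_0$; by real-analytic perturbation theory (Rellich's theorem) one can choose real-analytic eigenvalues $\lambda_m(s)$ of $M(s)$ together with a real-analytic orthonormal frame of eigenvectors $\phi_m(s)$, the $\{\lambda_m(s)\}$ being, for each $s$, the eigenvalues of $M(s)$ with multiplicity. Where $M(s)$ is non-singular the spectral index equals $\#\{m:\lambda_m(s)>0\}-N$, so $I_+-I_-$ is the number of branches $\lambda_m$ crossing from negative to positive at $s_0$ minus the number crossing the other way; branches not vanishing at $s_0$ contribute nothing, and the assumption that $M$ is non-singular on $(s_0-\epsilon,s_0)\cup(s_0,s_0+\epsilon)$ together with analyticity ensures that no branch vanishes identically (so each vanishing branch has an isolated, finite-order zero at $s_0$ and is single-signed on each side). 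Exactly $k:=\dim\ker M(s_0)$ of the branches vanish at $s_0$. The proposition thus reduces to the claim that each such branch has $\dot\lambda_m(s_0)>0$, for then it vanishes to first order and crosses upward, giving $I_+-I_-=k$.

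To compute $\dot\lambda_m(s_0)$ I would differentiate $M(s)\phi_m(s)=\lambda_m(s)\phi_m(s)$ at $s_0$, pair with $\phi_m(s_0)$, and use that $M(s_0)$ is Hermitian with $\phi_m(s_0)\in\ker M(s_0)$; this gives $\dot\lambda_m(s_0)=\phi_m(s_0)^\dagger\dot\T(s_0)\phi_m(s_0)$. The key input is the algebraic identity
\begin{equation}
(\T-\y)^2=\Big(\sum_{j}(T^j-y^j)^2\Big)\otimes\mathrm{Id}_2-\frac1e\,\dot\T ,
\end{equation}
which follows from $\sigma_j\sigma_k=\delta_{jk}\mathrm{Id}_2+\ii\epsilon_{jkl}\sigma_l$, the identity $[T^j-y^j,T^k-y^k]=[T^j,T^k]$, and the Nahm equation in the form $\dot T^l=-\tfrac{\ii e}{2}\epsilon_{ljk}[T^j,T^k]$. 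Contracting it with $\phi_m(s_0)$ and using $(\T(s_0)-\y)\phi_m(s_0)=0$ yields $\tfrac1e\dot\lambda_m(s_0)=\sum_j\big\|\big((T^j(s_0)-y^j)\otimes\mathrm{Id}_2\big)\phi_m(s_0)\big\|^2\geq0$, so every vanishing branch has $\dot\lambda_m(s_0)\geq0$; the whole content of the proposition is that this inequality is strict.

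Strict positivity is the main obstacle. Equality $\dot\lambda_m(s_0)=0$ forces $(T^j(s_0)-y^j)v=0$ for $j=1,2,3$ and for every nonzero $\CC^N$-component $v$ of $\phi_m(s_0)$; that is, $T^1(s_0),T^2(s_0),T^3(s_0)$ have a common eigenvector $v$ with eigenvalue $(y^1,y^2,y^3)$. I would rule this out by showing such a common eigenvector propagates under the Nahm flow: with $v$ held fixed, set $c_j(s):=(T^j(s)-y^j)v\in\CC^N$, so $c_j(s_0)=0$ and, by the Nahm equation, $\dot c_j(s)=\dot T^j(s)v=-\ii e\,\epsilon_{jkl}T^k(s)T^l(s)v$. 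Substituting $T^l(s)v=c_l(s)+y^lv$ and using $\epsilon_{jkl}y^ky^l=0$, one finds that $\dot c_j$ is a homogeneous linear function of $(c_1,c_2,c_3)$ with $s$-dependent coefficients; since $c_j(s_0)=0$, uniqueness for linear ODEs gives $c_j\equiv0$ near $s_0$. Hence $T^j(s)v=y^jv$ for all $j$ and all $s$ near $s_0$, so $M(s)$ is singular on an entire neighbourhood of $s_0$, contradicting the hypothesis that $M$ is non-singular on $(s_0-\epsilon,s_0)$. Therefore $\dot\lambda_m(s_0)>0$ for every vanishing branch and $I_+-I_-=k=\dim\ker(\T(s_0)-\y)$, as required. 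The only step needing genuine care is this propagation argument -- essentially the statement that a common eigenvector of Nahm data at one instant generates a one-dimensional invariant sub-representation -- and the cancellation there relies precisely on the antisymmetrised, quadratic form of the Nahm nonlinearity.
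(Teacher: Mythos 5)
Your proof is correct, and while its algebraic core is the same as the paper's, the surrounding machinery and one substantive step differ. The key identity you derive, $(\T-\y)^2=\bigl(\sum_j(T^j-y^j)^2\bigr)\otimes\mathrm{Id}_2-\tfrac{1}{e}\dot\T$, is exactly the paper's quaternionic decomposition $\Re(\T-\y)^2+\Im(\T-\y)^2$ combined with $\tfrac{\dd}{\dd s}(\T-\y)=-e\,\Im\bigl((\T-\y)^2\bigr)$; both proofs extract from it that $-\Im(\T_0-\y)^2$ is nonnegative on $\ker(\T_0-\y)$ and that this controls the sign of the eigenvalue crossing. Where you diverge is in how the crossing is counted: the paper Taylor-expands $\T(s)-\y$ in block form adapted to the kernel and factorises the characteristic polynomial to read off the signs of the roots on each side of $s_0$, whereas you invoke Rellich's theorem to get analytic eigenvalue branches and compute $\dot\lambda_m(s_0)$ by first-order perturbation theory. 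These are essentially equivalent in rigour and cost; the block-determinant route avoids citing analytic perturbation theory, while yours makes the ``each vanishing branch crosses upward with nonzero speed'' picture explicit.

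The genuinely valuable addition in your write-up is the strict-positivity step. The paper simply asserts that $\Re(\T_0-\y)^2=(T_0^i-y^i)(T_0^i-y^i)$ is positive definite, but a priori it is only positive semi-definite: it degenerates exactly when $T^1(s_0),T^2(s_0),T^3(s_0)$ admit a common eigenvector with eigenvalue $(y^1,y^2,y^3)$, in which case the corresponding branch could cross with $\dot\lambda_m(s_0)=0$ and the sign of the jump would not be controlled to first order. Your propagation argument --- that $c_j(s)=(T^j(s)-y^j)v$ satisfies a homogeneous linear ODE because the antisymmetrisation kills the $y^ky^l$ term, so a common eigenvector at $s_0$ persists for nearby $s$ and contradicts the non-singularity hypothesis on the punctured interval --- is exactly what is needed to justify the paper's assertion under the stated hypotheses, and it correctly identifies why the hypothesis of non-singularity on $(s_0-\epsilon,s_0)\cup(s_0,s_0+\epsilon)$ cannot be dropped. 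This closes a real (if minor) gap in the published argument.
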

\begin{proof}
We begin by rewriting the Nahm equation (\ref{NE}) as follows:
\begin{equation}
\frac{\dd}{\dd s} (\T-\y) = -e\Im( (\T-\y)^2 ).
\end{equation}
Here we are identifying the $2\times 2$ matrices $\mbox{Id}_2,-\ii\sigma^j$ with quaternions, and $\Im$ denotes the imaginary quartenionic part (so that $\Im\sigma^j=\sigma^j$ and $\Im\mbox{Id}_2=0$).  It follows that, in a neighbourhood of $s_0$,
\begin{equation}
\label{eq:5.1}
\T(s)-\y = (\T_0-\y) - (s-s_0)e\Im(\T_0-\y)^2 + O((s-s_0)^2),
\end{equation}
where $\T_0=\T(s_0)$.

Now suppose that $u$ is a 0-eigenvector of $\T_0-\y$.  Then
\begin{eqnarray}
0 &=& u^\dagger (\T_0-\y)^2 u \\
\label{eq:5.2}
&=& u^\dagger \Re (\T_0-\y)^2 u + u^\dagger \Im (\T_0-\y)^2 u
\end{eqnarray}
(where $\Re$ denotes the real quaternionic part).  Now $\Re(\T_0-\y)^2=(T_0^i-y^i)(T_0^i-y^i)$ is a positive definite hermitian matrix, and it follows that $-\Im(\T_0-\y)^2$ is positive definite on the kernel of $\T_0-\y$.  This information is sufficient to guarantee that the spectral index jumps by $n$ at $s_0$, where $n$ is the dimension of the kernel of $\T_0-\y$.

To see this, note that equation (\ref{eq:5.1}) takes the form,
\begin{equation}
\T(s)-\y = \left(\begin{array}{c|c} A+(s-s_0)C & (s-s_0)D^\dagger \\ \hline (s-s_0)D & (s-s_0)B \end{array}\right) + O((s-s_0)^2),
\end{equation}
with $A$ denoting the restricition of $\T_0-\y$ to the orthogonal complement of its kernel, $B$ denoting positive definite restriction of $-e\Im(\T_0-\y)^2$ to the kernel of $\T_0-\y$, and so on.  Using a matrix identity, the characteristic polynomial is
\begin{eqnarray}
\nonumber
\det(\T(s)-\y-\lambda) &=& \det(-\lambda+A+O(s-s_0)) \\
&& \times \det(-\lambda + (s-s_0)B + O((s-s_0)^2)).
\end{eqnarray}
For small enough $s-s_0$, the signs of the roots of the first factor on the right are independent of $s-s_0$.  The second factor on the right has only positive roots when $s>s_0$, and only negative roots when $s<s_0$, because $B$ is positive definite.  So, for small enough $\delta>0$, the difference between the spectral indices at $s_0+\delta$ and $s_0-\delta$ is equal to the dimension of the kernel of $\T_0-\y$.

\end{proof}

An immediate corollary is the following:
\begin{cor}
Suppose that $T^j(s)$ are a set of Nahm data.  Then $\T(s)-\y$ is singular precisely at the values $\pm s_0$ of $s$, for some $0\leq s_0<v$.
\end{cor}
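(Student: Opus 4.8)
\section*{Proof proposal}

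The plan is to combine three ingredients: the monotonicity of the spectral index from Proposition~\ref{prop:1}, its boundary values at $s=\pm v$, and a reflection symmetry in $s$ coming from the reality condition~(\ref{RC}).

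First I would establish the symmetry. Since the $T^i(s)$ are Hermitian, (\ref{RC}) reads $T^i(-s)=\overline{T^i(s)}$. Using the identity $\sigma_2\sigma_j^*\sigma_2=-\sigma_j$ (valid because the Pauli matrices are Hermitian), conjugation by $\mathrm{Id}_N\otimes\sigma_2$ together with complex conjugation gives
\[
(\mathrm{Id}_N\otimes\sigma_2)\,\overline{\T(s)-\y}\,(\mathrm{Id}_N\otimes\sigma_2)=-(\T(-s)-\y),
\]
where one uses that $\y$ has real coefficients so that it too is sent to $-\y$. Hence $\T(-s)-\y$ is similar, up to an overall sign, to the complex conjugate of $\T(s)-\y$, so $\dim\ker(\T(s)-\y)=\dim\ker(\T(-s)-\y)$. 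In particular the set $\mathcal S\subset(-v,v)$ of values of $s$ at which $\T(s)-\y$ is singular satisfies $\mathcal S=-\mathcal S$.

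Next I would bound $\mathcal S$. The Nahm data solve the polynomial ODE~(\ref{NE}), so $s\mapsto\det(\T(s)-\y)$ is real-analytic; by the pole term in the boundary condition~(\ref{BC}), together with the fact that $\ii J^j\otimes\sigma_j$ has the nonzero eigenvalues $(\pm N-1)/2$, this determinant is nonzero for $s$ near $\pm v$, hence is not identically zero and $\mathcal S$ is a finite subset of a compact subinterval of $(-v,v)$. On each component of $(-v,v)\setminus\mathcal S$ the eigenvalues of $\T(s)-\y$ stay away from $0$, so the spectral index is locally constant, and by Proposition~\ref{prop:1} it jumps upward by $\dim\ker(\T(s_0)-\y)\ge1$ at each $s_0\in\mathcal S$. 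Since the spectral index equals $-1$ for $s$ near $-v$ and $+1$ for $s$ near $v$, the total upward jump is exactly $2$, so $\sum_{s_0\in\mathcal S}\dim\ker(\T(s_0)-\y)=2$; in particular $\mathcal S$ is nonempty with one or two elements.

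Finally I would use the reflection symmetry to pin down the configuration: if $\mathcal S=\{s_0\}$ then $s_0=-s_0$, so $s_0=0$ and $\dim\ker(\T(0)-\y)=2$; if $\mathcal S=\{s_1,s_2\}$ with $s_1<s_2$, then $\{s_1,s_2\}=\{-s_1,-s_2\}$ forces $s_1=-s_2$, so $\mathcal S=\{\pm s_0\}$ with $s_0=s_2>0$ and each kernel one-dimensional. In both cases $\mathcal S=\{\pm s_0\}$ for some $0\le s_0<v$, which is the claim. The only delicate point is the first step: one must combine (\ref{RC}) with Hermiticity and choose the conjugating matrix $\mathrm{Id}_N\otimes\sigma_2$ precisely so that $\y$ is also reflected to $-\y$; the remainder is bookkeeping around Proposition~\ref{prop:1}.
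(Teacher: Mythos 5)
Your proposal is correct and follows essentially the same route as the paper: boundary values $\pm1$ of the spectral index at $s\to\pm v$, the monotone jumps of Proposition~\ref{prop:1} limiting the singular set to one or two points, and the reality condition (\ref{RC}) forcing the symmetric placement $\{\pm s_0\}$. Your only additions are welcome bookkeeping the paper leaves implicit, namely the explicit $\mathrm{Id}_N\otimes\sigma_2$ conjugation showing the singular set is symmetric and the analyticity argument that it is finite.
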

\begin{proof}
We have already seen that the boundary conditions (\ref{BC}) (together with the reality condition (\ref{RC})) imply that the spectral index is $\pm1$ as $s\to\pm v$.  The previous proposition says that the spectral index increases by at least 1 at every point where $\T(s)-\y$ is singular, so $\T(s)-\y$ can be singular at either 1 or 2 values of $s$.  The reality condition (\ref{RC}) implies that in the former case the points must be $s=0$, and in the latter case they must be $s=\pm s_0$ with $0<s_0<v$.
\end{proof}

Now we shall consider what happens to $s_0$ in the limit $N\to\infty$.  The restriction $\T(s)-\y$ to the interval $I=[0,v)$ is a $2\times2$ matrix-valued function on the Cartesian product of fuzzy sphere with $I$, and $s_0$ is by definition the unique point where this function is not invertible.  The large $N$ limit (if it exists) is a matrix-valued function $(t^i(s)-y^i)\sigma_i$ on $S^2\times I$, and following the logic introduced in the previous section, the large $N$ limit of $s_0$ is the point where this function is not invertible.  The function $(t^i(s)-y^i)\sigma_i$ fails to be invertible precisely when $t^i(s)=y^i$ at some point $p$ on the sphere.  
So $s_0(y^i)$ is defined by the property that $y^i$ is contained in the image of $t^i(\cdot,s_0)$.  This definition of $s_0(y^i)$ agrees precisely with the the definition of the function $\phi(y^i)$ obtained under the $\uu(\infty)$ Nahm transform in theorem \ref{thm:1}.

Thus the $\uu(\infty)$ Nahm transform will be obtained from the $\uu(N)$ Nahm transform, provided that the value of $\|\Phi\|$ obtained from (\ref{NT Phi}) converges to $s_0$ as $N\to\infty$.  From equation (\ref{NT Phi}), this would be the case if the two solutions $\psi_1,\psi_2$ of the Weyl equation became localised around the points $\pm s_0$ as $N\to\infty$.  We conjecture that this is indeed the case, and in the following subsection we will provide some arguments in favour of this conjecture.

\subsection{Localisation of solutions}

It is not unreasonable to expect the solutions of the system (\ref{WE}), (\ref{WBC}) to localise about the points $\pm s_0$.  As we have already seen, the existence of points $\pm s_0$ where the spectral index of $\T(s)-\y$ jumps is what allows this system to have solutions.

If $r$ is large, then $s_0$ is close to $v$, and we can approximate Nahm data using the boundary condition (\ref{BC}):
\begin{equation}
\label{approx ND}
T^i(s) \approx \frac{\ii}{e}\frac{J^i}{v-s}.
\end{equation}
The Weyl equation associated to this approximate Nahm data can be solved exactly.  The matrix $\y$ has eigenvalues $\pm r$ with eigenspaces of dimension $N$ and the matrix $\T(s)$ has eigenvalues $(\pm N-1)/(2e(v-s))$ with eigenspaces of dimension $N\pm 1$.  Let $u$ be a vector in the 1-dimensional intersection of the eigenspaces with eigenvalues $r$ and $(N-1)/(2e(v-s))$ respectively.  We make an ansatz $\psi(s)=f(s)u$: then the Weyl equation (\ref{WE}) becomes
\begin{equation}
\left(\frac{\dd }{\dd s} + \frac{N-1}{2}\frac{1}{v-s} - er\right)f = 0.
\end{equation}
The unique solution of this equation that decays at $s=v$ and $s=-\infty$ is
\begin{equation}
f(s) = \exp(ers)(v-s)^{(N-1)/2}.
\end{equation}
The solution $f$ achieves a maximum at $s_0 = v - (N-1)/(2er)$.  One also finds that $f''(s_0)/f(s_0)=-e^2r^2/(N-1)$ diverges as $N,e\to\infty$, indicating a localisation of the solution.  An approximate solution of the Weyl equation localised near $s=-s_0$ can be constructed in a similar way.  So for large $r$, the Weyl equation has two approximate solutions which localise at $s=\pm s_0$.

One interesting feature of the approximate solution constructed above is that it passes through the kernel of $\T(s)-\y$ at $s_0$.  Motivated by this, we now consider solutions of the Weyl equation (\ref{WE}) for which $\psi_0:=\psi(s_0)$ is a 0-eigenvector of $\T_0-\y:=\T(s_0)-\y$.  For such solutions, the quantity
\begin{equation}
\lambda := \frac{1}{\psi^\dagger\psi}\frac{\dd^2 (\psi^\dagger\psi)}{\dd s^2}(s_0)
\end{equation}
measures the localisation: if $\lambda$ is large and negative, the solution is localised.  Elementary analysis of the differential equations (\ref{NE}), (\ref{WE}) shows that
\begin{equation}
\lambda = \frac{-2e}{\psi_0^\dagger \psi_0}\psi_0^\dagger\frac{\dd \T}{\dd s}(s_0)\psi_0 = 2e^2\, \frac{\psi_0^\dagger \Im(\T_0-\y)^2 \psi_0}{\psi_0^\dagger \psi_0}.
\end{equation}
From the discussion around equation (\ref{eq:5.2}), we know that the quantity on the right is negative, as required.  Showing that this quantity tends to $-\infty$ is more subtle, and requires us to make sense of $\psi_0\psi_0^\dagger$ in the large $N$ limit.

For convenience we will denote $t_0^i=t^i(s_0)$.  The $2N\times 2N$ hermitian matrix $\psi_0\psi_0^\dagger$ converges to a $2\times 2$ hermitian matrix-valued function $u$ on the sphere as $N\to\infty$.  Furthermore, since
\begin{equation}
0 = \lim_{N\to\infty} (\T_0-\y)^2\psi_0\psi_0^\dagger = (t^i_0-y^i)(t^i_0-y^i)u,
\end{equation}
it is clear that $u$ is a delta function centred on the point $p$ where $t^i_0(p)=y^i$.  We will write $u=(w^0\mbox{Id}_2+w^j\sigma_j)\delta_p$ for real numbers $w^0,w^i$.

For any function $F$ on the fuzzy sphere whose large $N$ limit is a function $f$ on $S^2$,
\begin{eqnarray}
0 &=& \lim_{N\to\infty} \frac{4\pi}{2N} \Tr_2\Tr_N\left( \{\T_0,F\}uu^\dagger \right) \\
&=& \int_{S^2} \{t_0^i,f\} u^i\,\omega \\
&=& \{t_0^i,f\}(p)w^i.
\end{eqnarray}
Now the vector field $\{t_0^i,f\}$ is tangent to $t^i_0(S^2)$, and in particular all tangent vectors to the point $p$ can be obtained by making appropriate choices for $f$.  It follows that $w^i$ is a normal vector to the sphere at $p$.

The matrix $\psi\psi^\dagger$ has the property that it is equal to a constant times its square.  The same property should hold for the function $u$, and this implies that
\begin{eqnarray}
\label{eq:5.3}
w^0w^0 + w^iw^i &=& c w^0 \\
\label{eq:5.4}
2w^0w^i &=& c w^i
\end{eqnarray}
for some constant $c$.  This pair of equations has only two non-zero solutions: either $w^iw^i=w^0w^0$, or $w^i=0$ and $w^0\neq0$.  For the approximate large-$r$ Nahm data (\ref{approx ND}) one can check directly that $w^iw^i=w^0w^0$, so by continuity we expect this to be the case everywhere.  Henceforth we assume that $w^iw^i=w^0w^0$.

Now, the large $N$ limit of $\lambda/2e$ is
\begin{eqnarray}
\lim_{N\to\infty} \frac{\lambda}{2e} &=& \left( \int_{S^2} \frac{1}{2}\Tr_2 (u) \omega \right)^{-1} \int_{S^2} \frac{1}{2}\Tr_2\left(u\frac{\dd t^i}{\dd s}(s_0)\sigma_i\right)\omega  \\
&=& \frac{w^i}{w^0}\frac{\dd t^i}{\dd s}(s_0,p)
\end{eqnarray}
The quantity on the second line of this equation is equal to $-|\dd t^i/\dd s(s_0,p)|$, because the Nahm equation (\ref{NEinfty}) implies that $\dd t^i/\dd s$ is perpendicular to $t^i(S^2)$ (and we know that $\lambda$ is negative).  Therefore $\lambda\to-\infty$ as $N\to\infty$, and solutions of the Weyl equation (\ref{WE}) which pass through the kernel of $\T(s)-\y$ at $s_0$ become localised in the limit $N\to\infty$.  Such solutions are expected to be good approximations to the solutions of the system (\ref{WE}), (\ref{WBC}) in the large $N$ limit.


\subsection{The brane picture}

\begin{figure}
\begin{center}
\epsfig{file = 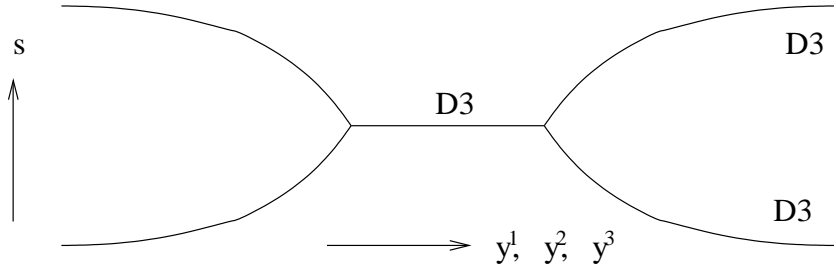, scale = 1.0}
\end{center}
\caption{A magnetic bag as a configuration of intersecting D3-branes.}
\label{fig:0}
\end{figure}

In type IIB string theory, monopoles with topological charge $N$ are represented by $N$ D1-branes stretched between 2 parallel $D3$-branes.  In the effective gauge theory on the D3-brane the D1-branes appear as magnetic monopoles \cite{douglas&li}, while in the effective theory on the D1-brane the D3-branes are described by solutions of the Nahm equation \cite{diaconescu}.  So from the point of view of string theory, the Nahm transform for charge $N$ monopoles relates two different descriptions of the same physical entity.

In the $N\to\infty$ limit, the D1-branes merge and the resulting configuration consists of two warped D3-branes which meet and intersect a third D3-brane, as sketched in figure \ref{fig:0} \cite{bol1}.  From the D3-brane perspective, the shapes of these two D3-branes are described by graphs of real functions on $\RR^3$, and these functions are nothing other than the scalar field $\phi$ associated with the magnetic bag and its negative $-\phi$.  One can also choose to parametrise each D3-brane from the D1-brane perspective, using a function $t^i:S^2\times I\to\RR^3$.  The relation between these two parametrisations is precisely the Nahm transform described in theorem \ref{thm:1}, so once again the Nahm transform relates two descriptions of the same physical system.

Our description of the Nahm transform at large $N$ extends the picture presented in \cite{cmt}: while the analysis in \cite{cmt} was restricted to ``fuzzy funnel'' configurations in the $v\to\infty$ limit with spherical symmetry, here we have been able to give a complete account of the large $N$ Nahm transform for arbitrary configurations with finite $v$.  Reference \cite{cmt} also analyses non-BPS aspects of the D1-D3 system using DBI actions: it would be interesting to extend this analysis to the more general setting considered here.


\section{An example: the magnetic disc}
\label{sec:6}

In the preceding section we described how the $\uu(N)$ Nahm transform converges to the $\uu(\infty)$ Nahm transform as $N\to\infty$.  In this section we will present a simple example of this limiting process, which fully supports the picture of convergence described above.

\begin{figure}[tb]
\epsfig{file = 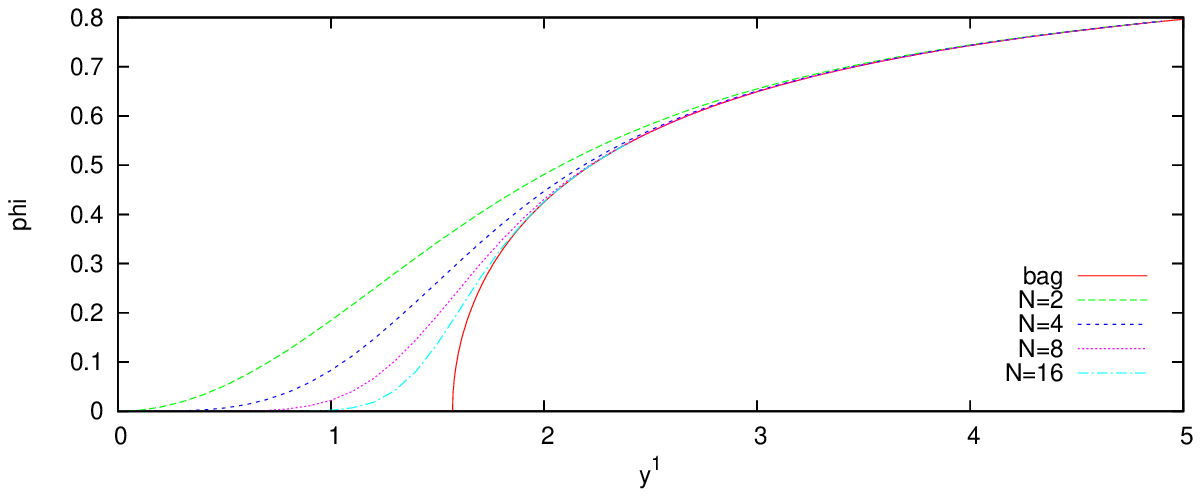, scale = 1}
\epsfig{file = 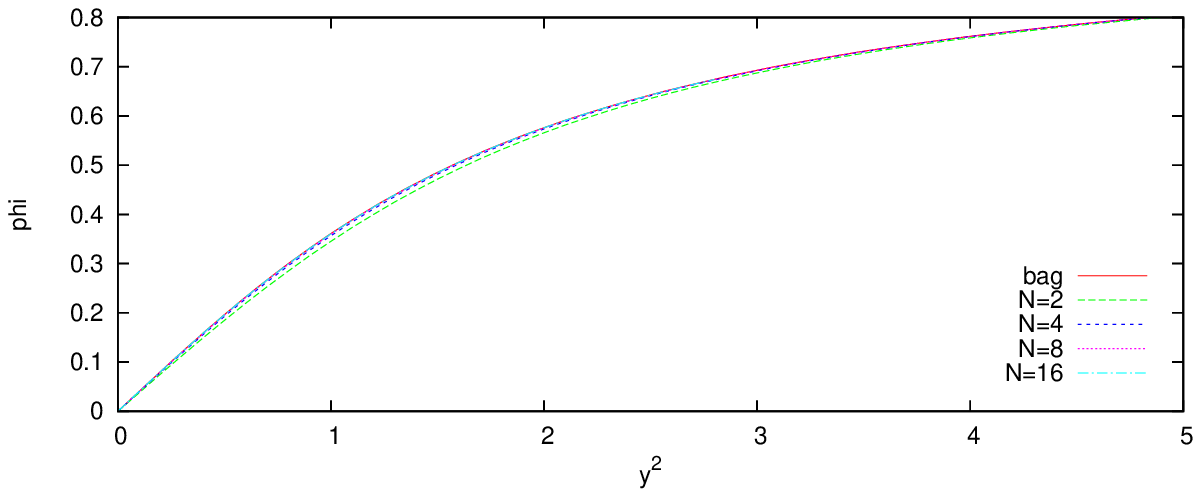, scale = 1}
\epsfig{file = 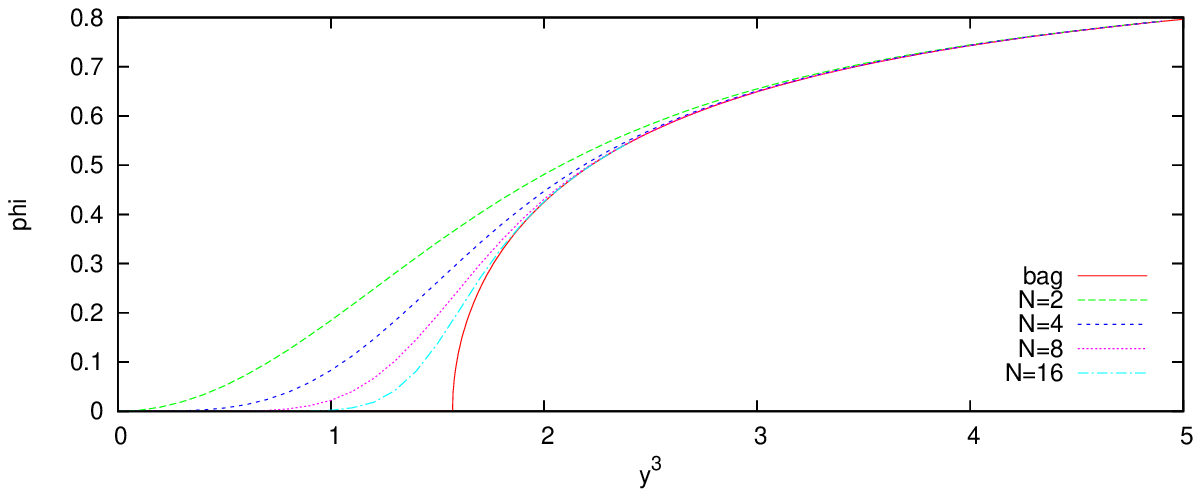, scale = 1}
\caption{Graphs comparing the $k^2=0$ monopoles with a range of values of $N$.  The top, middle and bottom graphs show $\|\Phi\|$ along the positive $y^1$-, $y^2$- and $y^3$-axes.  Charges $N=2$, 4, 8 and 16 are indicated by long-dashed, short-dashed, dotted and dash-dotted lines, while the bag limit is indicated by a solid line.  The graphs show that the function $\|\Phi\|$ converges to the bag limit $\phi$.}
\label{fig:1a}
\end{figure}

\begin{figure}[tb]
\epsfig{file = 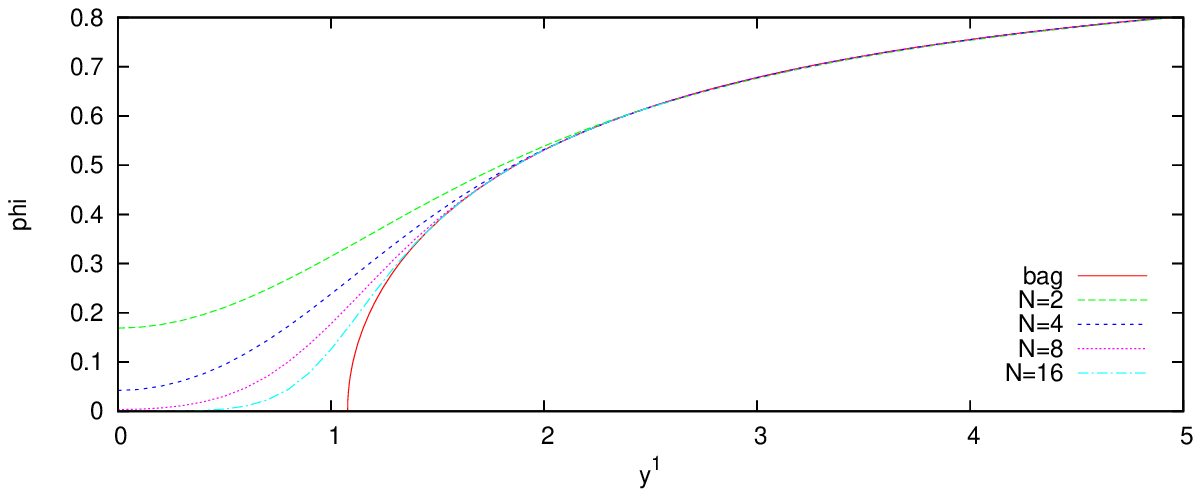, scale = 1}
\epsfig{file = 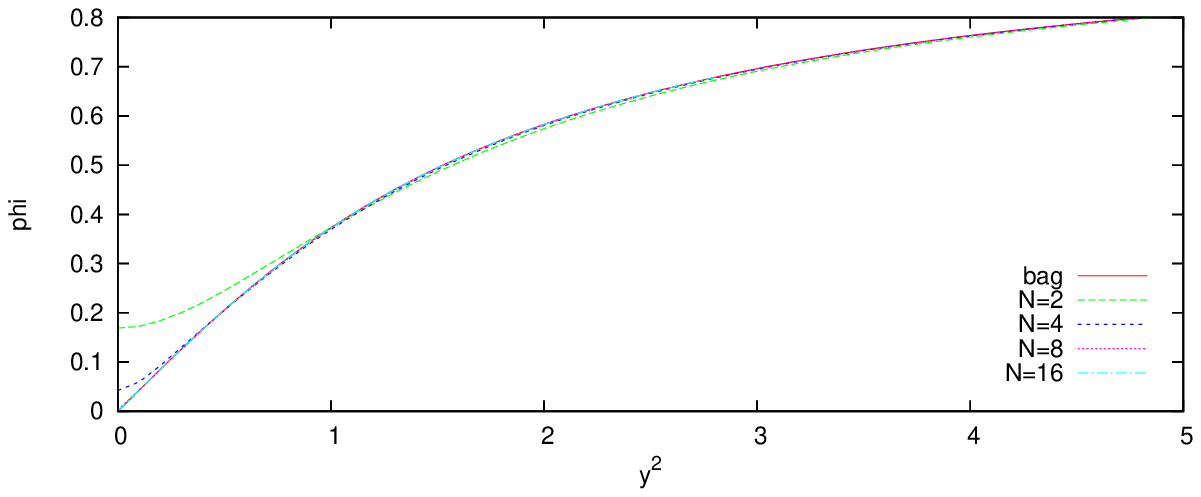, scale = 1}
\epsfig{file = 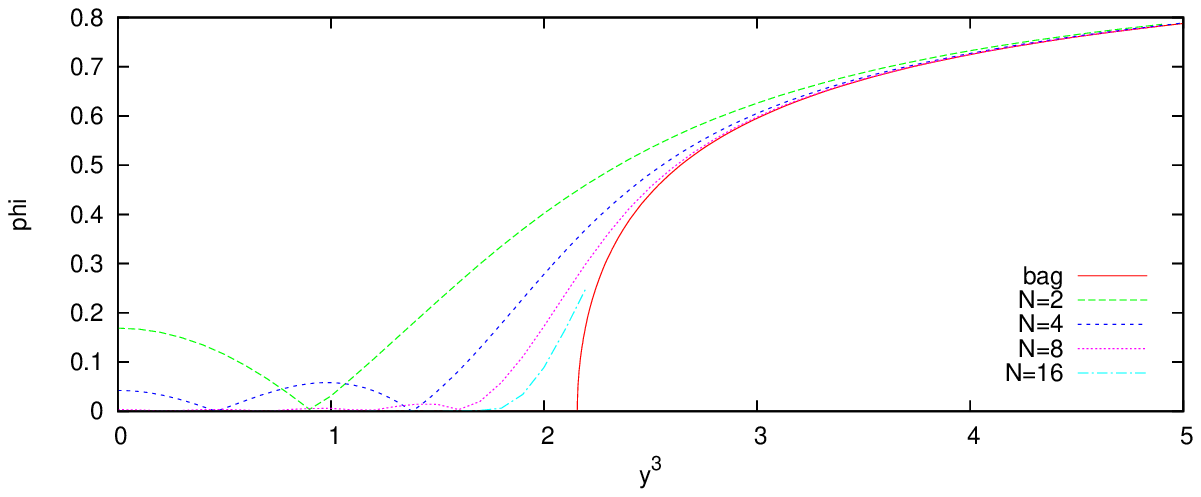, scale = 1}
\caption{Graphs comparing the $k^2=3/4$ monopoles with a range of values of $N$.  The top, middle and bottom graphs show $\|\Phi\|$ along the positive $y^1$-, $y^2$- and $y^3$-axes.  Charges $N=2$, 4, 8 and 16 are indicated by long-dashed, short-dashed, dotted and dash-dotted lines, while the bag limit is indicated by a solid line.  The graphs show that the function $\|\Phi\|$ converges to the bag limit $\phi$.}
\label{fig:1b}
\end{figure}

Although explicit examples of $\uu(N)$ Nahm data are more common than explicit examples of monopoles, they are still relatively rare.  In fact, there is only one known example of a family of Nahm data which allows one to take an $N\to\infty$ limit.  The family of Nahm data in question was written down by Ercolani and Sinha \cite{es}, and takes the following form:
\begin{equation}
\label{ES ND}
T^j(s) = \frac{\ii}{e}f_j(s)J^j.
\end{equation}
Here the functions $f_j(s)$ are given in terms of Jacobi elliptic functions with elliptic modulus $k$, complementary elliptic modulus $k'=\sqrt{1-k^2}$, and complete elliptic integral of the first kind $K$:
\begin{equation}
f_1(s) = \frac{Kk'}{ v}\mbox{nc}\left(\frac{Ks}{ v}\right),\; f_2(s) = \frac{Kk'}{ v}\mbox{sc}\left(\frac{Ks}{ v}\right),\; f_3(s) = \frac{K}{ v}\mbox{dc}\left(\frac{Ks}{ v}\right).
\end{equation}

The simple form of this Nahm data makes it easy to take the $N\rightarrow\infty$ limit:
\begin{equation}
t^j = \lim_{N\rightarrow\infty} T^j = \frac{q}{4\pi} f_j(s)\hat x^j.
\end{equation}
The magnetic bag obtained from this $\uu(\infty)$ Nahm data is degenerate, in the sense that the volume contained inside the bag surface $\Sigma$ is 0.  The surface $\Sigma$ takes the form of an ellipse lying in the $y^1,y^3$-plane with eccentricity $k$:
\begin{equation}
\Sigma = \Bigg\{(y^1,y^2,y^3)\bigg|y^2=0, (y^1)^2+(k'y^3)^2\leq\left(\frac{qk'K}{4\pi v}\right)^2 \Bigg\}.
\end{equation}
This flat magnetic bag will be called a ``magnetic disc''.  In the case $k=0$ the monopoles associated with (\ref{ES ND}) develop an axial symmetry and the ellipse becomes a circle; these monopoles and their large $N$ limit were analysed by Bolognesi \cite{bol1}.  A different large $N$ limit of the Nahm data (\ref{ES ND}) was discussed in \cite{hw}.

It is possible to perform the Nahm transform for the Nahm data (\ref{ES ND}) numerically \cite{dk}, and this provides a neat illustration of the limiting procedure described in the previous section.  Our numerical implementation of the Nahm transform follows the method described in \cite{hs}: the $N+1$ solutions of the Weyl equation (\ref{WE}) satisfying the boundary condition $\psi(-v)=0$ are constructed by shooting from $s=-v$, and the $N+1$ solutions satisfying $\psi(v)=0$ are constructed in an analogous manner.  The 2-dimensional space of solutions is found by matching these two sets of solutions up at $s=0$, which amounts to finding the kernel of a $2N\times 2(N+1)$ matrix.  The only deviation from \cite{hs} is that we find this kernel using singular value decomoposition rather than row reduction with back substitution, the former being more accurate for large matrices.

\begin{figure}[tb]
\epsfig{file = 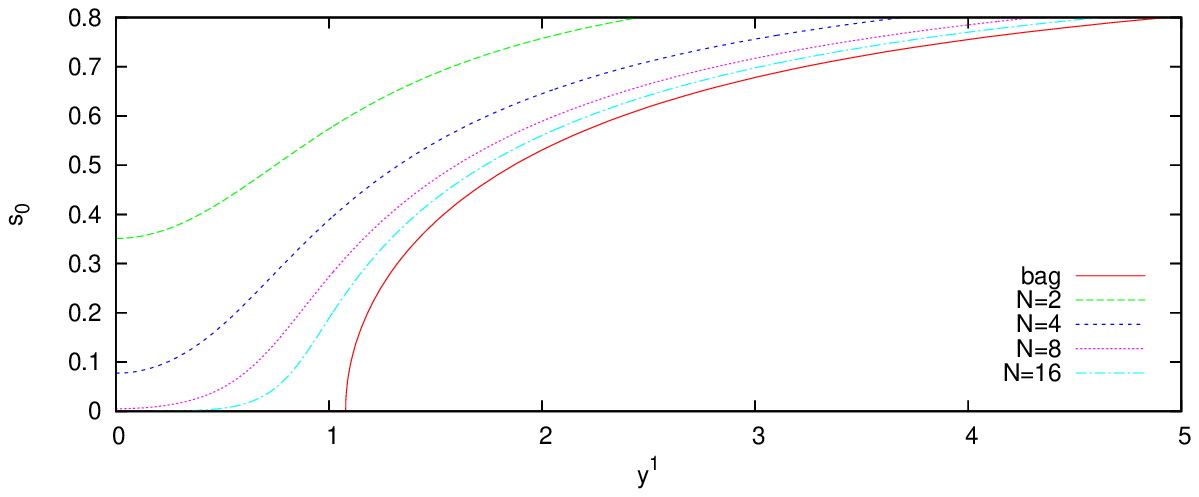, scale = 1}
\epsfig{file = 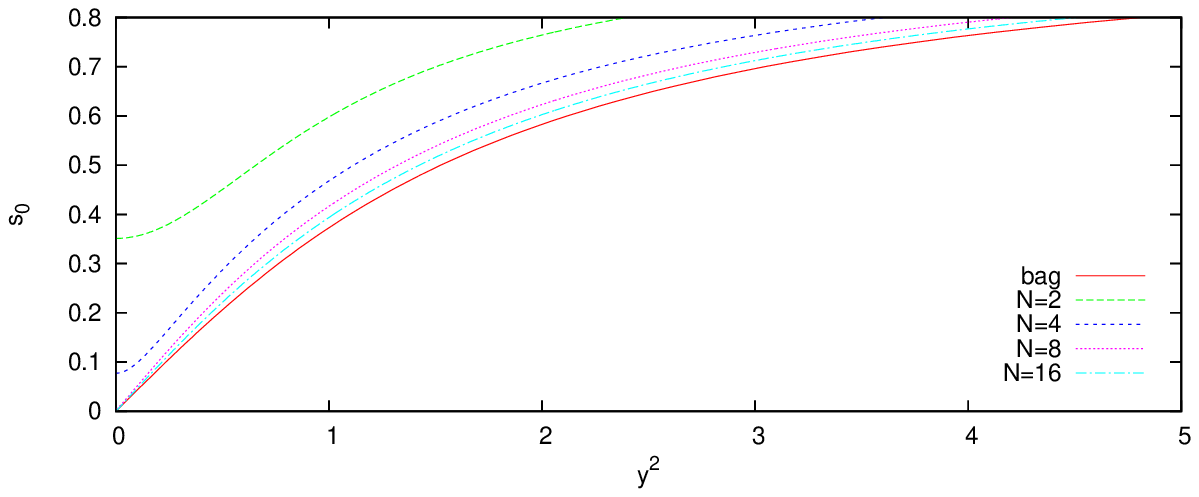, scale = 1}
\epsfig{file = 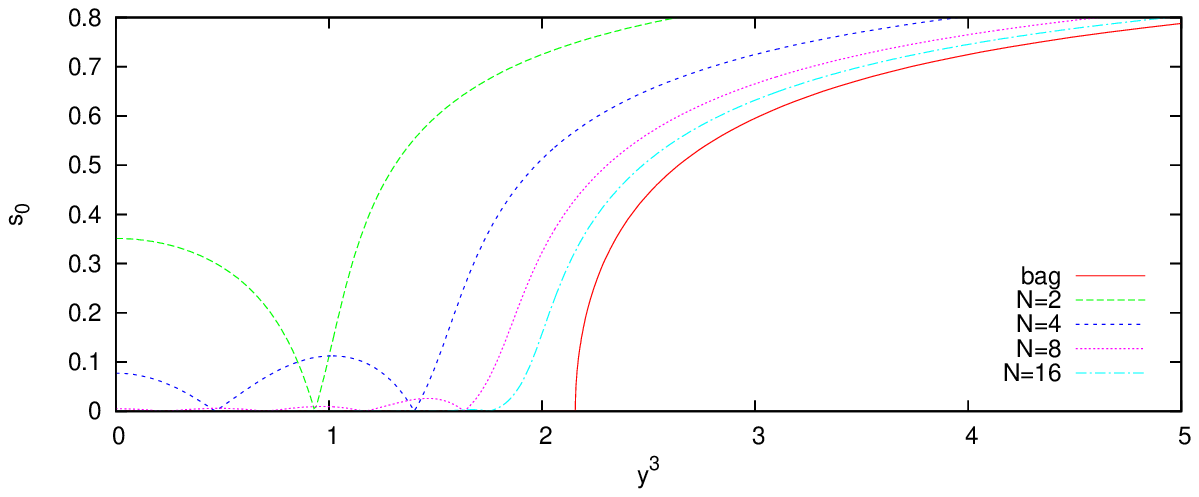, scale = 1}
\caption{Graphs comparing $s_0$ for the $k^2=3/4$ monopoles with a range of values of $N$.  The top, middle and bottom graphs show $s_0$ along the positive $y^1$-, $y^2$- and $y^3$-axes.  Charges $N=2$, 4, 8 and 16 are indicated by long-dashed, short-dashed, dotted and dash-dotted lines, while the bag limit is indicated by a solid line.  The graphs show that the function $s_0 $ converges to the bag limit.}
\label{fig:2}
\end{figure}

Our results are illustrated in figures \ref{fig:1a} and \ref{fig:1b}.  These graphs show the value of $\|\Phi\|$ along the 3 coordinate axes for two different values of elliptic parameter $k$.  In each plot the solid line is the limiting value of $\phi$ predicted by the $\uu(\infty)$ Nahm transform, and the other lines correspond to different values of the topological charge $N$ up to $N=16$.  The graphs clearly show that the monopoles converge to the magnetic disc as expected.

For values of $N$ greater than 16, the output of our numerical Nahm transform is unreliable except in a region close to the monopole core.  The source of error is the singular value decomposition: at points far from the monopole core, the matrix whose kernel needs to be found has very large eigenvalues, and this makes it difficult to find the kernel with sufficient accuracy.  However, as figures \ref{fig:1a} and \ref{fig:1b} show, already at charge 16 the value of $\|\Phi\|$ away from the monopole core is well-approximated by the $\uu(\infty)$ Nahm transform.  This suggests that the $\uu(\infty)$ Nahm transform could be used to study large-charge monopoles in regions where the $\uu(N)$ Nahm transform is difficult to implement.

Our analysis in the previous section relied on the point $s_0$ at which the spectral index jumps being a good approximation to $\phi$, and it is rewarding to verify this for the Nahm data under consideration.  First of all, in figure \ref{fig:2} we have plotted $s_0$ as a function of $y^i$ in a similar manner to figure \ref{fig:1b}.  These plots demonstrate that $s_0$ converges to the bag limit $\phi$ as expected.  In figure \ref{fig:3} we have plotted the values of $s_0$ and $\|\Phi\|$ for a charge 4 monopole with $k^2=3/4$, and the bag limit $\phi$, along the $y^3$-axis.  It is notable that $\|\Phi\|$ is much closer to the bag limit than $s_0$; so this family of monopoles converges to the magnetic disc much faster than the arguments presented in section \ref{sec:5} suggest.  It is also intriguing that the zeros of $s_0$ and $\|\Phi\|$ agree almost exactly.  If this was true in general it would provide a useful approximation to the zeros of $\Phi$, since $s_0$ is much easier to calculate from the Nahm data than $\Phi$.

\begin{figure}
 \epsfig{file = 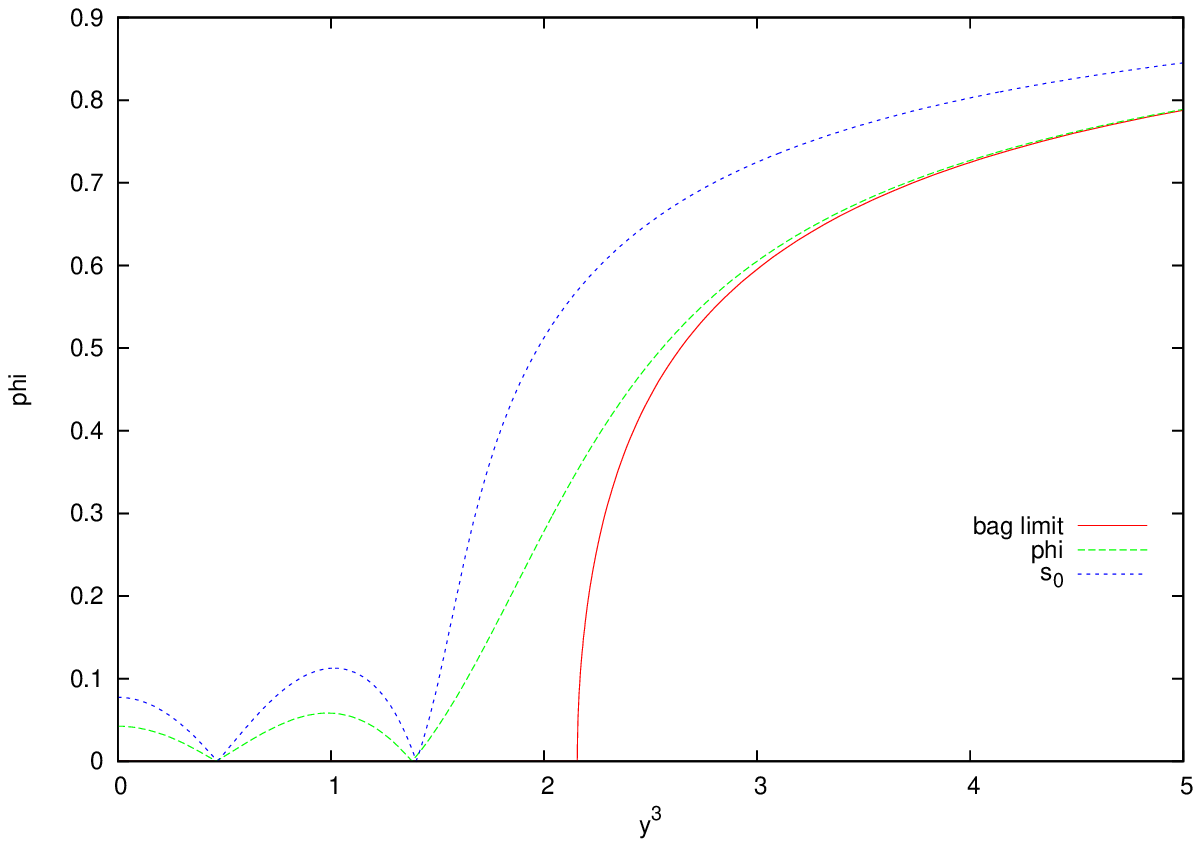, scale = 1.0}
 \caption{Graph comparing $\|\Phi\|$ and $s_0$ for an $N=4$, $k^2=3/4$ monopole and the bag limit $\phi$ along the $y^3$-axis.  Notice that $\|\Phi\|$ is much closer to the bag limit than $s_0$, and also that zeros of $s_0$ and $\|\Phi\|$ almost coincide.}
\label{fig:3}
\end{figure}

\section{Conclusion}
\label{sec:7}

We have described a simple transform which relates magnetic bags to $\uu(\infty)$ Nahm data.  We have argued using fuzzy spheres that this Nahm transform is the large $N$ limit of the Nahm transform relating charge $N$ monopoles with $\uu(N)$ Nahm data.  Finally, we have presented numerical evidence that this is the case, and that the $\uu(\infty)$ Nahm transform approximates monopoles well even at relatively low charge.

We hope that the $\uu(\infty)$ Nahm transform represents a significant step towards understanding magnetic bags.  At present, understanding of magnetic bags is limited by a lack of examples: for instance, no examples of monopoles with topological charge greater than 7 have been shown to  resemble the spherical bag (although the icosahedral $N=11$ monopole is conjectured to \cite{lw}).  We saw in section \ref{sec:3} that the spherical magnetic bag has vanishing conserved charges $C_l$, so a promising way to construct such monopoles would be to look for $\uu(N)$ Nahm data whose conserved charges are in some sense small.

Ideally, one would like to have an analytic proof of the magnetic bag conjecture.  It may be easier first to consider the analogous conjecture for Nahm data: that every set of $\uu(\infty)$ Nahm data is the limit of a sequence of $\uu(N)$ Nahm data.  Proving this statement would presumably require a firmer analytical understanding of fuzzy spheres than has been presented here, and reference \cite{fuzzy} may help in this respect.

There are a number of questions raised by the present work which seem worthy of further investigation.  First of all, the moduli spaces of charge $N$ monopoles and of $\uu(N)$ Nahm data are known to be hyperk\"ahler manifolds, and the $\uu(N)$ Nahm transform is an isometry between these two manifolds.  It would be interesting to investigate whether similar statements hold for the $\uu(\infty)$ Nahm transform.

Second, since it is known that harmonic functions on $\RR^3$ can be used to construct 4-dimensional hyperk\"ahler metrics via the Gibbons-Hawking ansatz, our Nahm transform associates hyperk\"ahler metrics to solutions of the $\uu(\infty)$ Nahm equation.  It is also known that hyperk\"ahler metrics can be constructed from solutions to the Nahm equation associated with the Lie algebra of divergence-free vector fields on a 3-manifold \cite{ajs}.  Now $\uu(\infty)$ is the Lie algebra of divergence-free vector fields on $S^2$, so it seems plausible that our Nahm transform is related to the results of \cite{ajs}.

Third, although we have worked exclusively with $S^2$, one can associate a $\uu(\infty)$ Nahm equation to any surface equipped with an area form.  For example, if $\RR^2$ is equipped with its standard area form $\omega=\dd x^1\wedge\dd x^2$ then the associated $\uu(\infty)$ Nahm equation has the following simple solution \cite{ward}:
\begin{equation}
t^1(s) = x^1,\quad t^2(s)=x^2,\quad t^3(s)=\frac{4\pi s}{q}.
\end{equation}
If we take the range of $s$ to be $[0,\infty)$, then application of the $\uu(\infty)$ Nahm transform yields the following configuration on $\RR^3$:
\begin{equation}
\label{wall nahm data}
\phi(y^i) = \begin{cases} \frac{q}{4\pi}y^3 & y^3\geq0 \\ 0 & y^3<0 \end{cases}.
\end{equation}
This coincides with Lee's approximation to a monopole wall \cite{lee}.  It seems likely that this configuration is a limit of a non-abelian monopole wall, and that this version of the $\uu(\infty)$ Nahm transform could be derived as a limit of the Nahm transform for monopole walls.  Note however that the Nahm transform for monopole walls is only partially understood \cite{ward05}.

Fourth, although we formulated the $\uu(\infty)$ Nahm transform as a construction for harmonic functions on $\RR^3$, the construction seems to work equally well for other 3-manifolds.  Consider for example the ball model $\mathbb{H}^3$ of hyperbolic space, with coordinates $y^i$ satisfying $y^iy^i\leq1$ and metric
\begin{equation}
g = \left(\frac{2}{1-y^jy^j}\right)^2\dd y^i\dd y^i.
\end{equation}
By following the steps in the proof of theorem \ref{thm:1}, a harmonic function on $\mathbb{H}^3$ can be associated with a solution of the following Nahm equation:
\begin{equation}
\label{NEhyp}
 \left(\frac{1-t^jt^j}{2}\right) \frac{\dd t^i}{\dd s} = \frac{4\pi}{q}\frac{1}{2}\epsilon_{ijk} \{t^j,t^k\}.
\end{equation}
A Nahm transform for monopoles on hyperbolic space is known only in the case when the product of the Higgs vacuum expectation value with the scalar curvature is an integer \cite{ward99}, and it is an open problem to determine whether or not a Nahm transform exists in the general case.  Our results suggest that if a Nahm transform exists, it ought to reduce to (\ref{NEhyp}) in the large $N$ limit.  So it would be interesting to investigate $\uu(N)$ generalisations of this equation.

\subsubsection*{Acknowledgements}
I am grateful to Richard Ward for reading an early version of this manuscript and for pointing out the connection between equation (\ref{wall nahm data}) and reference \cite{lee}.  I acknowledge funding from the EPSRC under grant number EP/G038775/1.

\end{document}